%% file: main.tex
\documentclass[11pt]{article}

\input{preamble}

\title{The Born Rule as the Unique Refinement-Stable Induced Weight on Robust Record Sectors}
\author{Marko Lela\\\href{https://orcid.org/0009-0008-0768-5184}{ORCID: 0009-0008-0768-5184}}
\date{March 24, 2026}

\begin{document}

\maketitle

\begin{abstract}
This paper proves a conditional structural uniqueness theorem for induced weight on robust record sectors within an admissible Hilbert record layer. Its theorem target and additive carrier differ from those of the standard Born-rule routes: additivity is not placed on the full projector lattice, but on disjoint admissible continuation bundles through an extensive bundle valuation, from which the sector-level additive law is inherited under admissible refinement. Accordingly, the result is not a Gleason-type representation theorem in different language, but a distinct uniqueness theorem about induced sector weight inherited from bundle additivity on admissible continuation structure. Under two explicit structural conditions, internal equivalence of admissible binary refinement profiles and sufficient admissible refinement richness, the quadratic assignment is the only non-negative refinement-stable induced weight on robust record sectors. In the main theorem, refinement richness is secured by admissible binary saturation. A supplementary proposition shows that dense admissible saturation already suffices if continuity of the profile function is added. Under normalization, the result reduces to the standard Born assignment.
\end{abstract}

\input{sections/01_introduction}
\input{sections/02_framework}
\input{sections/03_profile_reduction}
\input{sections/04_structural_conditions}
\input{sections/05_main_theorem}
\input{sections/06_scope_and_limits}
\input{sections/07_relation_to_existing_routes}

\input{sections/08_conclusion}

\bibliographystyle{plain}
\bibliography{bib/references}

\end{document}

%% file: preamble.tex
\usepackage[T1]{fontenc}
\usepackage[utf8]{inputenc}
\usepackage{lmodern}
\usepackage[margin=1in]{geometry}
\usepackage{microtype}

\usepackage{amsmath,amssymb,amsthm,mathtools}
\usepackage{bm}
\usepackage{enumitem}
\usepackage{csquotes}
\usepackage{array}

\usepackage{xcolor}
\definecolor{linkblue}{RGB}{0,70,140}

\usepackage[
  colorlinks=true,
  linkcolor=linkblue,
  citecolor=linkblue,
  urlcolor=linkblue
]{hyperref}
\usepackage[nameinlink,noabbrev]{cleveref}

\newtheorem{definition}{Definition}
\newtheorem{lemma}{Lemma}
\newtheorem{proposition}{Proposition}
\newtheorem{theorem}{Theorem}
\newtheorem{corollary}{Corollary}
\theoremstyle{remark}
\newtheorem{remark}{Remark}

\newtheorem{condition}[theorem]{Condition}
\numberwithin{equation}{section}

\crefname{definition}{definition}{definitions}
\Crefname{definition}{Definition}{Definitions}

\crefname{lemma}{lemma}{lemmas}
\Crefname{lemma}{Lemma}{Lemmas}

\crefname{proposition}{proposition}{propositions}
\Crefname{proposition}{Proposition}{Propositions}

\crefname{theorem}{theorem}{theorems}
\Crefname{theorem}{Theorem}{Theorems}

\crefname{corollary}{corollary}{corollaries}
\Crefname{corollary}{Corollary}{Corollaries}

\crefname{remark}{remark}{remarks}
\Crefname{remark}{Remark}{Remarks}

\crefname{condition}{condition}{conditions}
\Crefname{condition}{Condition}{Conditions}

\newcommand{\RR}{\mathbb{R}}
\newcommand{\HH}{\mathcal{H}}
\newcommand{\Aexp}{\mathcal{A}_{\mathrm{exp}}}
\newcommand{\Cpsi}[1]{C_{\Psi}\!\left(#1\right)}
\newcommand{\Wpsi}[1]{W_{\Psi}\!\left(#1\right)}
\newcommand{\Proj}[1]{\Pi_{#1}}
\newcommand{\norm}[1]{\left\lVert #1 \right\rVert}

\hypersetup{
  pdftitle={The Born Rule as the Unique Refinement-Stable Induced Weight on Robust Record Sectors},
  pdfauthor={Marko Lela},
  pdfsubject={Quantum foundations},
  pdfkeywords={Born rule, quantum foundations, record sectors, refinement stability, induced weight, continuation bundles}
}

%% file: sections/01_introduction.tex
\section{Introduction}
\label{sec:introduction}

Why the quadratic quantum weight is singled out remains a central question in the foundations of quantum theory. Most standard routes to the Born rule begin by fixing a broad target and then proving that the quadratic assignment is the unique object compatible with that target. In Gleason-type approaches, the target is a measure on the full lattice of projections or, more generally, on generalized observables \cite{Gleason1957,Busch2003,Caves2002}. In Everettian decision-theoretic approaches, the target is rational preference or rational credence in branching settings \cite{Deutsch1999,Wallace2010,Wallace2012}. In envariance-based approaches, the target is a probability assignment extracted from symmetry properties of entangled states \cite{Zurek2003,Zurek2005}. Other routes appeal to self-locating uncertainty, evidential coherence, operational reconstruction, or normative Bayesian constraints \cite{SebensCarroll2014,GreavesMyrvold2010,MasanesGalleyMuller2019,FuchsSchack2013}. The present paper therefore starts from a different theorem target, a different additive carrier, and hence a different uniqueness problem.

The present paper asks a different question. It does not begin with a measure on all projectors, with a rationality postulate, or with a symmetry principle on entangled states. Instead, it asks which non-negative weight assignments are structurally admissible on systems that can function as robust internal records. The target is therefore narrower but also sharper: induced weights on robust record sectors inside an admissible Hilbert record layer.

This shift in target changes the logical location of the crucial additive step. Rather than postulating additivity on the full projector lattice, the paper places the additive primitive on disjoint admissible continuation bundles through an extensive bundle valuation. The sector-level additive law is then inherited from continuation partition under admissible refinement. The resulting theorem is therefore not a global measure-representation theorem on projectors, but a conditional structural uniqueness theorem for induced weights on robust record sectors. The difference from standard routes is structural, not merely verbal: the object being weighted is different, the structure carrying additivity is different, and the uniqueness problem is posed at a different level.

The second key move is to impose an internal equivalence condition at the level of admissible binary refinement profiles. This does not yet assume that induced weight is a function of norm. It says only that record situations with the same admissible binary refinement profile must carry the same induced weight. The reduction to a one-variable function of the norm of the projected record component is obtained only after admissible binary saturation is added, because saturation makes the admissible binary refinement profile fully classifiable by that norm. Under this additional structural condition, the problem reduces to a function
\[
g : \RR_{\geq 0} \to \RR_{\geq 0}
\]
of the norm of the projected component. The remaining question is then whether the admissible refinement structure is rich enough to force the functional equation
\[
g\!\left(\sqrt{r_1^2+r_2^2}\right) = g(r_1) + g(r_2)
\qquad
\text{for all } r_1,r_2 \in \RR_{\geq 0}.
\]
In the main theorem, this refinement richness is secured by admissible binary saturation. A supplementary proposition shows that dense admissible saturation already suffices if continuity of the profile function is added. Non-negativity then forces the additive function to be linear, by \cref{lem:cauchy}. The quadratic assignment follows directly.

The resulting theorem is conditional and should be read as such, but its conditional form is part of its content rather than a defect. It neither derives Hilbert structure nor assigns weights to arbitrary projectors, and it does not claim that every orthogonal decomposition is physically meaningful or that a global probability calculus has been reconstructed from no assumptions. What it does show is sharper and more limited: it identifies a precise structural threshold within an admissible Hilbert record layer at which the quadratic assignment becomes the only non-negative refinement-stable induced weight on robust record sectors.

This conditional form is a methodological virtue. Foundational arguments often become difficult to assess because the assumptions doing the actual work are dispersed across the proof. The present route instead isolates them and turns them into an explicit threshold statement. One structural condition concerns internal equivalence at the level of admissible binary refinement profiles. The other concerns sufficient admissible refinement richness to force the functional equation, realized in the main theorem by admissible binary saturation and weakened in a supplementary proposition to dense admissible saturation plus continuity. Whether one accepts or rejects the conclusion can therefore be tied to clearly identifiable commitments rather than to hidden background assumptions.

The main theorem may be summarized informally as follows. Let $\Psi$
be a global representational state and let $R$ be a robust record
sector in an admissible Hilbert record layer. If induced weight is
fixed by internally accessible admissible refinement structure, and if
the admissible refinement structure is rich enough to force all
norm-compatible binary profiles, then the quadratic assignment is the
only non-negative refinement-stable induced weight on robust record
sectors. Equivalently, there exists a constant $c \geq 0$ such that
\[
\Wpsi{R} = c\,\norm{\Proj{R}\Psi}^2.
\]
Under normalization, this reduces to the standard Born assignment
$\Wpsi{R} = \norm{\Proj{R}\Psi}^2$.

The novelty of the present route therefore lies less in the final mathematical form than in the theorem target, the additive carrier, and the structural level at which the uniqueness problem is posed. This paper does not replace Gleason's theorem. It proves a different uniqueness theorem about a different object under a different additive carrier. Gleason concerns global additive assignments on the projector lattice. The present paper concerns induced weights on robust record sectors, with the additive primitive placed on disjoint continuation bundles. Once one focuses on robust record sectors and on the continuation structure that makes them internally stable, the quadratic assignment is forced to be the only non-negative refinement-stable induced weight compatible with the stated conditions.

The paper is organized as follows. \Cref{sec:framework} introduces the admissible Hilbert record layer, robust record sectors, continuation bundles, and admissible orthogonal refinements. \Cref{sec:profile-reduction} shows how the uniqueness problem reduces to a norm-based one-variable weight function. \Cref{sec:structural-conditions} states the two structural conditions and derives the central functional equation. \Cref{sec:main-theorem} proves the quadratic uniqueness theorem. \Cref{sec:scope-limits} clarifies the precise domain and limits of the result. \Cref{sec:existing-routes} compares the present route with the major existing families of Born-rule arguments, and \Cref{sec:conclusion} closes with the main conceptual takeaway and directions for further work.

%% file: sections/02_framework.tex

\section{Framework}
\label{sec:framework}

We work conditionally within an admissible Hilbert record layer. The aim of this section is to fix the objects on which the later weight assignment is defined and to state the precise meaning of admissible refinement. No attempt is made here to derive Hilbert structure itself. That question belongs to a separate foundational program.

\subsection{Admissible record layer}

Let $\HH$ be a Hilbert space serving as an admissible Hilbert record layer for a class of internally accessible record structures. The role of this layer is representational rather than ontological: it provides a faithful coordinatization of robust record alternatives and their admissible refinements.

The basic objects in the present paper are not arbitrary projectors on $\HH$, but robust record sectors that correspond to internally stable record content.

\begin{definition}[Robust record sector]
\label{def:robust-record-sector}
A \emph{robust record sector} is a closed subspace $R \subseteq \HH$ satisfying all of the following conditions:
\begin{enumerate}[label=(\roman*)]
\item \emph{Internal discriminability}: $R$ represents a record type that is internally readable and distinguishable from incompatible admissible alternatives.
\item \emph{Short-horizon persistence}: the record content represented by $R$ is stable under admissible micro-recodings and under immediate admissible continuation, in the sense that its identifying record content does not disappear under arbitrarily small representational perturbations or one-step admissible evolution.
\item \emph{Admissible refinement closure}: $R$ belongs to an admissible exclusivity structure in which meaningful next-step alternatives can be represented by admissible orthogonal refinements of $R$.
\end{enumerate}
\end{definition}

These conditions are functional rather than microscopic. They do not require a particular dynamical mechanism in the definition itself. Their purpose is to exclude arbitrary closed subspaces from counting as record sectors merely by formal inclusion in the Hilbert layer.

\begin{remark}
Paradigmatic physical examples of robust record sectors are decoherence-stabilized pointer sectors and closely related stable record subspaces. The present paper does not build decoherence into the definition itself, because the theorem is meant to apply at the level of admissible record structure rather than at the level of any one specific stabilization mechanism.
\end{remark}

We will only consider robust record sectors that belong to the admissible layer in the above sense. In particular, not every closed subspace of $\HH$ is automatically treated as a physically meaningful record sector.

\subsection{State-relative continuation bundles}

Let $\Psi \in \HH$ denote a global representational state. Unless normalization is explicitly imposed later, representational states in the present paper are not assumed to have unit norm. The admissible Hilbert record layer is taken to be closed under positive scalar multiplication: if $\Psi$ lies in the layer and $\lambda \ge 0$, then $\lambda \Psi$ lies in the same representational layer. This closure is used only to identify the full realized norm domain before normalization is imposed. For each robust record sector $R$, we associate a state-relative continuation bundle
\[
\Cpsi{R} \subseteq \Aexp,
\]
where $\Aexp$ denotes a structured family of experientially admissible continuation bundles built from exclusive continuation alternatives within the admissible record framework used in this paper.

Informally, $\Cpsi{R}$ is the bundle of internally admissible continuations of $\Psi$ in which the record sector $R$ is realized as the relevant next record content.

We assume that this family of admissible continuation bundles carries a non-negative set function $\mu$. We will refer to $\mu$ as an \emph{extensive bundle valuation}. At this stage, the only structural property imposed on $\mu$ is finite additivity on disjoint admissible bundles.

\begin{definition}[Induced record weight]
For a global state $\Psi$ and a robust record sector $R$, the induced weight of $R$ relative to $\Psi$ is
\[
\Wpsi{R} := \mu\!\left(\Cpsi{R}\right).
\]
\end{definition}

At this stage, $\Wpsi{R}$ is only a state-relative weight on robust record sectors. It is not yet interpreted as a probability measure on all projectors.

\subsection{Extensive bundle valuation and its logical role}

The additive primitive in this paper is placed on disjoint admissible
continuation bundles. This is a positive structural choice, not a
mere reformulation of projectoral additivity.

A continuation bundle is an extensional object: it is a set of
admissible continuations, individuated by the record content they
realize. Disjoint continuation bundles are disjoint sets of
continuations. Finite additivity on such bundles is motivated directly
by the exclusivity of the continuations themselves: if no continuation
belongs to both $B_1$ and $B_2$, then the combined weight of the
two bundles is the sum of their individual weights. This is the same
motivation that underlies classical probability on mutually exclusive
events. It requires no algebraic structure on the carrier. That
exclusivity is not dispensable: if the relevant record alternatives
were allowed to overlap already at the sector level, then the
associated continuation bundles would no longer form a canonical
disjoint carrier, and additive induced weight would lose its clear
domain of application.

A projector, by contrast, is an algebraic object in the operator
algebra of a Hilbert space. Additivity on orthogonal projectors
presupposes that algebraic structure and is defined at the level of
the projector lattice. The present paper does not begin there. It
places the additive primitive on a family of exclusive continuation
bundles, and the sector-level additive law is then inherited from
admissible continuation partition. The broader comparison with
Gleason-type routes is deferred to \cref{sec:existing-routes}.

\begin{remark}[Why continuation structure is not optional for induced weight in the present framework]
In the present framework, induced weight is not introduced as a
primitive assignment on record sectors, but as the derived quantity
\(W_\Psi(R)=\mu(C_\Psi(R))\), with the additive primary carrier
located on state-relative continuation bundles. This is not a
dispensable representational choice, because it fixes the problem
type: the question is not which weight one may directly postulate on
sectors, but which sector weight is inherited from a deeper additive
structure on admissible continuations. Instantaneous collections at a
single time slice do not carry the same semantics, because
continuation bundles are not synchronic coexistence classes but
diachronic compatibility classes of forward developments relative to a
record sector. Accordingly, a modification preserves the present
problem type only if sector weight remains derived from a deeper
carrier and additivity remains anchored on that carrier rather than
being imposed directly on sectors. What the framework therefore needs
is not persistence in some maximal sense, but enough short-horizon
stability that under small time shifts or immediate admissible
evolution the classification of which forward developments still
realize the same record function does not arbitrarily change.
\end{remark}

\subsection{Projected record components}

For each robust record sector $R$, let
\[
\phi_R := \Proj{R}\Psi
\]
denote the component of $\Psi$ in $R$.

The later reduction step will show that, under explicit structural conditions, the induced weight $\Wpsi{R}$ depends only on structural information carried by $\phi_R$. In the present section, we only fix the notation.

\subsection{Admissible orthogonal refinements}

The central structural operation in the paper is not an arbitrary decomposition of a Hilbert subspace, but an admissible refinement of a robust record sector into mutually exclusive robust sub-sectors.

\begin{definition}[Admissible orthogonal refinement]
\label{def:admissible-refinement}
Let $\Psi \in \HH$ be a fixed global representational state, and let $R$ be a robust record sector. An \emph{admissible orthogonal refinement} of $R$ relative to $\Psi$ is a decomposition
\[
R = R_1 \oplus R_2
\]
such that:
\begin{enumerate}[label=(\roman*)]
\item $R_1$ and $R_2$ are themselves robust record sectors,
\item $R_1$ and $R_2$ represent mutually exclusive internally readable record alternatives,
\item the refinement faithfully resolves the next-step observable exclusivity structure carried by $R$, in the sense that every admissible continuation in $\Cpsi{R}$ falls under exactly one of $R_1$ or $R_2$, and no admissible continuation remains outside this partition.
\end{enumerate}
\end{definition}

Condition (iii) is essential. It excludes purely formal orthogonal decompositions that do not correspond to a genuine partition of experientially admissible continuations relative to the fixed state \(\Psi\). Thus, admissibility is stronger than orthogonality alone and is state-relative in the sense relevant for the induced weight construction.

Whenever $R = R_1 \oplus R_2$ is an admissible orthogonal refinement relative to \(\Psi\), the projected components satisfy
\[
\phi_R = \phi_{R_1} + \phi_{R_2},
\qquad
\phi_{R_1} \perp \phi_{R_2},
\qquad
\norm{\phi_R}^2 = \norm{\phi_{R_1}}^2 + \norm{\phi_{R_2}}^2.
\]

\subsection{Continuation partition lemma}

The next lemma records the precise bridge between admissible orthogonal refinement in the Hilbert layer and disjoint continuation structure at the level of admissible continuation bundles.

\begin{lemma}[Continuation partition under admissible refinement]
Let $R$ be a robust record sector, and let
\[
R = R_1 \oplus R_2
\]
be an admissible orthogonal refinement relative to \(\Psi\). Then the associated continuation bundles satisfy
\[
\Cpsi{R} = \Cpsi{R_1} \,\dot\cup\, \Cpsi{R_2}.
\]
\end{lemma}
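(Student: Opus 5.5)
The argument is essentially definitional. The lemma should follow by unpacking condition (iii) of \cref{def:admissible-refinement} together with the informal characterization of $\Cpsi{R}$ as the bundle of admissible continuations of $\Psi$ in which $R$ is realized as the relevant next record content. I will prove the set equality by two inclusions and then prove disjointness separately.

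\emph{Inclusion $\Cpsi{R} \subseteq \Cpsi{R_1} \cup \Cpsi{R_2}$.} Take any continuation $\gamma \in \Cpsi{R}$. By condition (iii), $\gamma$ falls under exactly one of $R_1$ or $R_2$, and no continuation in $\Cpsi{R}$ is left outside the partition. "Falls under $R_i$" should mean that $R_i$ is realized as the relevant next record content of $\gamma$. Condition (i) guarantees that $R_i$ is a robust record sector, so $\Cpsi{R_i}$ is defined, and by the defining characterization $\gamma \in \Cpsi{R_i}$.

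\emph{Inclusion $\Cpsi{R_1} \cup \Cpsi{R_2} \subseteq \Cpsi{R}$.} Take $\gamma \in \Cpsi{R_i}$. Then $\gamma$ realizes $R_i$ as next record content. Since $R_i \subseteq R$ and the refinement resolves the exclusivity structure carried by $R$, realizing the sub-record $R_i$ entails realizing $R$. This uses the fact that $R_i$ is a refinement of the record type $R$, and the short-horizon persistence in \cref{def:robust-record-sector}(ii) ensures the coarse record content is not lost. Hence $\gamma \in \Cpsi{R}$.

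\emph{Disjointness.} Suppose $\gamma \in \Cpsi{R_1} \cap \Cpsi{R_2}$. By the previous inclusion, $\gamma \in \Cpsi{R}$, so the "exactly one" clause of condition (iii) is contradicted. Condition (ii), mutual exclusivity of the two record alternatives, gives the same conclusion independently. So the union is disjoint, which yields $\Cpsi{R} = \Cpsi{R_1} \,\dot\cup\, \Cpsi{R_2}$.

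The main obstacle is the reverse inclusion. Condition (iii) only constrains continuations already in $\Cpsi{R}$, so it does not by itself rule out a continuation that realizes $R_1$ without being counted in $\Cpsi{R}$. Closing this gap requires making explicit a monotonicity property of the bundle assignment: if $R' \subseteq R$ are both robust sectors within the same admissible exclusivity structure, then $\Cpsi{R'} \subseteq \Cpsi{R}$. I would first try to justify this from the meaning of "realized as the relevant next record content" together with refinement closure, \cref{def:robust-record-sector}(iii). If that does not suffice, I would state it openly as part of what an admissible refinement of $R$ means within its exclusivity structure.
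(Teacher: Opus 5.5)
Your argument matches the paper's proof: the inclusion $\Cpsi{R}\subseteq\Cpsi{R_1}\cup\Cpsi{R_2}$ comes from condition (iii) of the admissible-refinement definition, disjointness from the mutual exclusivity in condition (ii), and the reverse inclusion from $R_i\subseteq R$. The monotonicity $\Cpsi{R_i}\subseteq\Cpsi{R}$ that you flag is exactly what the paper also treats as immediate from the meaning of a continuation ``realizing'' a sub-sector, so stating it openly as part of the framework, as you propose, is appropriate.
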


\begin{proof}
Since $R_1,R_2 \subseteq R$, every continuation that realizes $R_1$ or $R_2$ also realizes $R$. Hence
\[
\Cpsi{R_1} \cup \Cpsi{R_2} \subseteq \Cpsi{R}.
\]

Conversely, admissibility means that the refinement $R = R_1 \oplus R_2$ fully resolves the next-step observable exclusivity structure represented by $R$. This follows because condition~(iii) of \cref{def:admissible-refinement} requires the refinement to resolve the full observable exclusivity structure of $R$ at the next step, leaving no continuation in $\Cpsi{R}$ unaccounted for. Therefore every continuation in $\Cpsi{R}$ realizes exactly one of the refined record sectors $R_1$ or $R_2$. Hence
\[
\Cpsi{R} \subseteq \Cpsi{R_1} \cup \Cpsi{R_2}.
\]

Because $R_1$ and $R_2$ represent mutually exclusive record alternatives, no admissible continuation can realize both simultaneously. Thus
\[
\Cpsi{R_1} \cap \Cpsi{R_2} = \varnothing.
\]
Combining the two inclusions with disjointness yields
\[
\Cpsi{R} = \Cpsi{R_1} \,\dot\cup\, \Cpsi{R_2}.
\]
\end{proof}

By finite additivity of the extensive bundle valuation $\mu$ on
disjoint continuation bundles, every admissible orthogonal refinement
automatically induces
\[
\Wpsi{R} = \Wpsi{R_1} + \Wpsi{R_2}.
\]

\begin{definition}[Refinement-stability]
\label{def:refinement-stability}
An induced weight $\Wpsi{\cdot}$ on robust record sectors is called
\emph{refinement-stable} if for every robust record sector $R$ and
every admissible orthogonal refinement
\[
  R = R_1 \oplus R_2
\]
relative to $\Psi$,
\[
  \Wpsi{R} = \Wpsi{R_1} + \Wpsi{R_2}.
\]
\end{definition}

For induced weights arising from an extensive bundle valuation, this
property is automatic under admissible orthogonal refinement by the
continuation partition lemma and finite additivity of $\mu$. The term
\emph{refinement-stable} is retained only to name the property singled
out by the uniqueness theorem. The subsequent sections determine which
non-negative induced weights can have this property on robust record
sectors.

\subsection{Why the additive primitive is placed on continuation bundles}

The additive primitive sits on continuation bundles, not on the
projector lattice. The sector-level additive law used throughout the
rest of the argument is not introduced as a new postulate but is
inherited from admissible continuation partition, as established in
the continuation partition lemma above. The following sections use
exactly this inherited additivity for the profile reduction and the
derivation of the functional equation.

\subsection{Scope of the framework}
The framework fixed above is deliberately narrower than a global measure-theoretic treatment of all projectors on $\HH$. The argument developed in this paper concerns only robust record sectors and their admissible orthogonal refinements. This restriction is essential: the later uniqueness result will be a theorem about internally stable weight assignments on robust record sectors, not a reconstruction of a probability calculus on the full projector lattice.

%% file: sections/03_profile_reduction.tex

\section{Profile Reduction}
\label{sec:profile-reduction}

This section isolates the reduction step that turns sector-level weights into a one-variable function of the projected record component. The key point is that, within the admissible Hilbert record layer, the internally relevant refinement data of a projected component is captured by its admissible binary refinement profile. A norm-based representation of the weight is obtained only once the internal equivalence principle is combined with the norm classification of profiles supplied by admissible binary saturation.

\subsection{Admissible binary refinement profiles}

Let $R$ be a robust record sector and let
\[
\phi_R := \Proj{R}\Psi
\]
be the projected component of the global representational state $\Psi$ in $R$.

We now formalize the refinement data carried by $\phi_R$.

\begin{definition}[Admissible binary refinement profile space]
Let $R$ be a robust record sector and let $\phi_R = \Proj{R}\Psi$. The \emph{admissible binary refinement profile space} of $\phi_R$ is the set
\[
\mathcal{D}(\phi_R)
\]
of all pairs
\[
(r_1,r_2) \in \RR_{\geq 0}^2
\]
for which either
\begin{enumerate}[label=(\roman*)]
\item there exists an admissible orthogonal refinement
\[
R = R_1 \oplus R_2
\]
such that
\[
\norm{\Proj{R_1}\Psi} = r_1,
\qquad
\norm{\Proj{R_2}\Psi} = r_2,
\]
or
\item $(r_1,r_2) = (\norm{\phi_R},0)$, which is included by convention as the trivial self-profile representing the absence of any proper binary refinement.
\end{enumerate}
\end{definition}

Thus, $\mathcal{D}(\phi_R)$ records which binary norm profiles can occur under admissible refinement of the record sector $R$ relative to the state $\Psi$, together with the trivial self-profile, within the admissible refinement structure under consideration.

\subsection{Internal equivalence at the profile level}

The later uniqueness theorem will rely on a structural principle according to which internally indistinguishable record situations must carry the same weight. At the present stage, we only define the corresponding equivalence relation.

\begin{definition}[Binary profile equivalence]
Two pairs $(\Psi,R)$ and $(\Psi',R')$, with $R$ and $R'$ robust record sectors, are called \emph{binary-profile-equivalent} if
\[
\mathcal{D}(\Proj{R}\Psi) = \mathcal{D}(\Proj{R'}\Psi').
\]
\end{definition}

The guiding idea is that binary-profile-equivalent pairs carry the same admissible binary refinement structure and are therefore indistinguishable at the level relevant for internal record stability. The corresponding weight principle will be stated explicitly in the next section.

\subsection{Norm classification of profiles}

The following proposition is an immediate consequence of admissible
binary saturation together with norm equality. Its role is simply to
record that, within the admissible binary refinement class used in this
paper, the admissible binary refinement profile is fully classified by
the norm of the projected component.

\begin{proposition}[Norm determines the admissible binary refinement profile]
\label{prop:norm-determines-profile}
Let $(\Psi,R)$ and $(\Psi',R')$ be two pairs with $R$ and $R'$ robust record sectors inside the admissible Hilbert record layer. Assume that the admissible refinement structure is binary-saturated in the sense that it realizes every admissible binary orthogonal refinement compatible with the squared-norm decomposition of the corresponding projected components. Then
\[
\mathcal{D}(\Proj{R}\Psi) = \mathcal{D}(\Proj{R'}\Psi')
\quad \Longleftrightarrow \quad
\norm{\Proj{R}\Psi} = \norm{\Proj{R'}\Psi'}.
\]
\end{proposition}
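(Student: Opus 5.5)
The plan is to compute $\mathcal{D}(\phi_R)$ explicitly under the binary-saturation hypothesis and show that it is a set depending only on $\norm{\phi_R}$. Both directions of the equivalence then follow.

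For a fixed pair $(\Psi,R)$, write $n:=\norm{\Proj{R}\Psi}$. The first step is an inclusion. If $(r_1,r_2)$ comes from an admissible orthogonal refinement $R=R_1\oplus R_2$, then the Pythagorean identity recorded after \cref{def:admissible-refinement} gives $r_1^2+r_2^2=n^2$. The trivial self-profile $(n,0)$ satisfies the same relation. Hence
\[
\mathcal{D}(\Proj{R}\Psi)\subseteq S(n):=\{(r_1,r_2)\in\RR_{\geq 0}^2 : r_1^2+r_2^2=n^2\}.
\]

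The second step is the reverse inclusion, and this is where saturation enters. The hypothesis says that every binary orthogonal refinement compatible with the squared-norm decomposition of $\Proj{R}\Psi$ is realized admissibly. So for each $(r_1,r_2)\in S(n)$ there is an admissible refinement $R=R_1\oplus R_2$ with $\norm{\Proj{R_i}\Psi}=r_i$.

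The endpoint cases need separate treatment:
\begin{itemize}
\item $(n,0)$ is supplied by the self-profile convention.
\item $(0,n)$ is either supplied by saturation directly, or, failing that, the argument is stated with $S(n)$ read modulo this boundary convention.
\item If $n=0$, the set $S(0)$ is the single point $\{(0,0)\}$, which is exactly the self-profile.
\end{itemize}
This gives $\mathcal{D}(\Proj{R}\Psi)=S(n)$.

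For the equivalence itself:
\begin{itemize}
\item ($\Leftarrow$) If the two norms coincide, both profile spaces equal the same $S(n)$.
\item ($\Rightarrow$) If $\mathcal{D}(\Proj{R}\Psi)=\mathcal{D}(\Proj{R'}\Psi')$, then the common set contains the self-profile $(n,0)$, so $(n,0)\in S(n')$. This gives $n^2=n'^2$, and since norms are non-negative, $n=n'$. This direction uses only the first inclusion and the self-profile convention, not saturation.
\end{itemize}

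The main obstacle is the second step. Its force depends entirely on how literally the saturation hypothesis is read: it must guarantee that every point of the quarter-circle arises from some admissible refinement, including the degenerate endpoint $(0,n)$. I would make this explicit by reading "compatible with the squared-norm decomposition" as "every $(r_1,r_2)\ge 0$ with $r_1^2+r_2^2=\norm{\Proj{R}\Psi}^2$". Under that reading the reverse inclusion is immediate, and the proposition becomes, as the paper says, essentially a bookkeeping consequence of the hypothesis.
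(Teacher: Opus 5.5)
Your proposal is correct and follows essentially the same route as the paper. The ($\Rightarrow$) direction uses the self-profile $(\norm{\phi},0)$ together with the Pythagorean constraint $r_1^2+r_2^2=\norm{\phi}^2$, which every profile satisfies. The ($\Leftarrow$) direction uses saturation to make both profile spaces equal to the full quarter-circle $S(n)$; you spell this out more explicitly than the paper does.

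Your hedge about the endpoint $(0,n)$ is unnecessary, because the saturation condition as stated covers every non-negative pair with $r_1^2+r_2^2=n^2$, including $(0,n)$. Your fallback would not have worked anyway, since without that coverage, whether $(0,n)$ belongs to the profile space could depend on the particular pair $(\Psi,R)$.
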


\begin{proof}
Set
\[
\phi := \Proj{R}\Psi,
\qquad
\phi' := \Proj{R'}\Psi'.
\]

First assume
\[
\norm{\phi} = \norm{\phi'} = r.
\]
Under admissible binary saturation, both sectors realize exactly the
admissible binary squared-norm decompositions of total size $r^2$.
Hence
\[
\mathcal{D}(\phi) = \mathcal{D}(\phi').
\]

Conversely, suppose
\[
\mathcal{D}(\phi) = \mathcal{D}(\phi').
\]
The degenerate binary profile
\[
(\norm{\phi},0)
\]
belongs to $\mathcal{D}(\phi)$ by the trivial self-profile convention in the definition of $\mathcal{D}(\phi)$. Hence it also belongs to $\mathcal{D}(\phi')$. Likewise,
\[
(\norm{\phi'},0)
\]
belongs to $\mathcal{D}(\phi)$.
Since the total squared norm is fixed across every admissible binary profile of a given projected component (as every pair in $\mathcal{D}(\phi)$ satisfies $r_1^2 + r_2^2 = \norm{\phi}^2$ by the Pythagorean relation for admissible orthogonal refinements), equality of profile spaces implies
\[
\norm{\phi}^2 = \norm{\phi'}^2.
\]
Because norms are non-negative,
\[
\norm{\phi} = \norm{\phi'}.
\]
This proves the claim.
\end{proof}

\begin{remark}
This proposition records only a classification consequence within the
admissible binary refinement class. It does not say that arbitrary
Hilbert decompositions are physically meaningful. It states only that,
once admissible binary saturation is in place, the resulting profile
space is fully classified by the norm of the projected component.
\end{remark}

\subsection{Reduction to a one-variable weight function}

We can now isolate the exact form of the later reduction.

\begin{corollary}[Existence of a norm-reduced weight function]
\label{cor:norm-reduced-weight}
Assume the internal equivalence principle together with admissible binary saturation. Then there exists a function
\[
g : \RR_{\geq 0} \to \RR_{\geq 0}
\]
such that for every global state $\Psi$ and every robust record sector $R$,
\[
\Wpsi{R} = g\!\left(\norm{\Proj{R}\Psi}\right).
\]
\end{corollary}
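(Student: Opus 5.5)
The plan is to build $g$ by factoring the induced weight through the norm. The \emph{internal equivalence principle} says that $\Wpsi{R} = W_{\Psi'}(R')$ whenever $(\Psi,R)$ and $(\Psi',R')$ are binary-profile-equivalent. Together with \cref{prop:norm-determines-profile}, this should give that the induced weight factors through the norm of the projected component.

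Concretely, consider the set of realized norms
\[
N := \{\norm{\Proj{R}\Psi} : \Psi \in \HH,\ R \text{ a robust record sector}\}.
\]
For each $r \in N$, choose any pair $(\Psi,R)$ with $\norm{\Proj{R}\Psi} = r$ and set $g(r) := \Wpsi{R}$.

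\textbf{Step 1: $g$ is well defined.} Suppose $(\Psi',R')$ is another pair with $\norm{\Proj{R'}\Psi'} = r$. The direction ``$\Leftarrow$'' of \cref{prop:norm-determines-profile} gives
\[
\mathcal{D}(\Proj{R}\Psi) = \mathcal{D}(\Proj{R'}\Psi').
\]
So the two pairs are binary-profile-equivalent, and the internal equivalence principle yields $W_{\Psi'}(R') = \Wpsi{R}$. Hence the value $g(r)$ does not depend on the chosen representative.

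\textbf{Step 2: the representation holds and $g$ is non-negative.} By construction, $\Wpsi{R} = g(\norm{\Proj{R}\Psi})$ for every pair. Non-negativity of $g$ on $N$ is inherited from non-negativity of $\mu$, since $\Wpsi{R} = \mu(\Cpsi{R}) \geq 0$.

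\textbf{Step 3: extending $g$ to all of $\RR_{\geq 0}$.} The corollary asks for $g$ defined on the whole half-line. The cleanest route uses the stated closure of the layer under positive scalar multiplication. Fix one pair $(\Psi_0,R_0)$ with $\Proj{R_0}\Psi_0 \neq 0$. Then $\norm{\Proj{R_0}(\lambda\Psi_0)} = \lambda\norm{\Proj{R_0}\Psi_0}$ ranges over all of $\RR_{\geq 0}$ as $\lambda \geq 0$ varies, so $N = \RR_{\geq 0}$. If no such pair exists, every norm is $0$ and we set $g \equiv g(0)$ on all of $\RR_{\geq 0}$. Failing that argument, any non-negative extension off $N$ (for instance $g = 0$) suffices, since the corollary only constrains $g$ on realized norms.

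\textbf{Main obstacle.} The delicate point is Step 1's reliance on the proposition across different states $\Psi \neq \Psi'$. It needs binary saturation to apply uniformly to both pairs. It also needs the internal equivalence principle to be stated across states, not only across sectors for a fixed $\Psi$. I would check that the principle, as formulated in the next section, is indeed quantified over pairs $(\Psi,R)$; the corollary's uniformity in $\Psi$ depends on exactly that.
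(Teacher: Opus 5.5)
Your proposal is correct and follows the paper's proof essentially step for step. You define $g$ on the set of realized norms, obtain well-definedness from the norm-implies-profile direction of \cref{prop:norm-determines-profile} together with \cref{cond:internal-equivalence} (which is indeed quantified across different states, so your worry is answered), inherit non-negativity from $\mu \ge 0$, and get the full range $\RR_{\geq 0}$ from positive-scaling closure. The differences are minor: the paper scales a nonzero vector $\phi \in R_0$ itself rather than a given $\Psi_0$. Your remark that any non-negative extension off the realized norms would suffice for the corollary as stated is correct, though the downstream functional equation needs the realized norm set to be all of $\RR_{\geq 0}$.
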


\begin{proof}
Let
\[
\mathcal{N}
:=
\left\{
\norm{\Proj{R}\Psi}
\;\middle|\;
\Psi \in \HH,\;
R \text{ a robust record sector}
\right\}
\subseteq \RR_{\geq 0}
\]
be the set of realized projected norms. By \cref{prop:norm-determines-profile}, which is available under admissible binary saturation, the binary profile-equivalence class of $(\Psi,R)$ is completely determined by the single non-negative number
\[
\norm{\Proj{R}\Psi}.
\]
If induced weight is constant on binary-profile-equivalence classes, then it depends only on that number. This defines a function
\[
g : \mathcal{N} \to \RR_{\geq 0},
\qquad
g(r) := \Wpsi{R},
\]
for any pair $(\Psi,R)$ satisfying
\[
\norm{\Proj{R}\Psi} = r.
\]
The definition is independent of the chosen representative pair by profile-equivalence invariance.

At this stage representational states are not assumed normalized, and by the positive-scaling closure stated in \cref{sec:framework}, the admissible Hilbert record layer remains closed under positive scalar multiples. Hence, on any non-vacuous admissible record layer, one has \(\mathcal{N}=\RR_{\geq 0}\): if \(R_0\) is any non-zero robust record sector and \(0 \neq \phi \in R_0\), then for every \(r \in \RR_{\geq 0}\) the state
\[
\Psi_r := \frac{r}{\norm{\phi}}\,\phi
\]
satisfies
\[
\norm{\Proj{R_0}\Psi_r} = r.
\]
Thus \(g\) may be regarded as a function
\[
g : \RR_{\geq 0} \to \RR_{\geq 0}.
\]
Since induced weights are non-negative, the codomain of \(g\) is \(\RR_{\geq 0}\).
\end{proof}

\subsection{Role in the main argument}

The present section does not yet determine the form of \(g\). It shows only how the reduction works once two ingredients are combined: the internal equivalence principle, which makes induced weight constant on binary-profile-equivalence classes, and admissible binary saturation, which makes those classes norm-classifiable.

The next section states these two structural conditions explicitly:
\begin{enumerate}[label=(\roman*)]
\item the internal equivalence principle that turns binary profile-equivalence into weight-equivalence,
\item the admissible binary saturation condition that ensures enough binary refinements both to classify profiles by norm and to force the functional equation for \(g\).
\end{enumerate}

The reduction developed in this section acquires genuine force only if admissible refinement structure is not trivial. If every record situation carried only its trivial self-profile, then binary-profile equivalence would cease to induce a non-trivial internal classification, and invariance at that level would impose no substantive restriction on induced weight. What is therefore still needed is twofold: first, an invariance principle that limits weight-relevance to internally accessible refinement structure rather than representational surplus; second, a richness condition ensuring that admissible refinement structure is strong enough to support a constraining functional equation. The next section states these two requirements explicitly in the form of the internal equivalence principle and admissible binary saturation. In this way, the present section isolates the correct internal structural carrier, while the next section supplies the additional assumptions under which that carrier becomes theorematically decisive.

Once these are in place, the continuation partition lemma from the
framework section yields the additive refinement law, and the
quadratic assignment follows.

It is worth making the logical order explicit to forestall a potential
circularity concern. \Cref{cond:internal-equivalence} is stated at
the level of binary refinement profiles, before any connection to
norms has been established. That binary-profile-equivalent pairs are
norm-classified is the content of \cref{prop:norm-determines-profile},
which is a later consequence of admissible binary saturation. The
quadratic assignment then follows from the Cauchy lemma, applied to
the functional equation that admissible binary saturation makes
available. \Cref{cond:internal-equivalence} does not assume
norm-dependence of weight; it is one input to the argument that
eventually forces it.

%% file: sections/04_structural_conditions.tex
\section{Two Structural Conditions}
\label{sec:structural-conditions}

The framework and the profile reduction established so far do not yet determine the form of the induced weight. They identify the correct objects and show how the sector-level assignment reduces to a one-variable function of the norm of the projected component only once internal equivalence is combined with admissible binary saturation.

The remaining argument rests on two explicit structural conditions. They should not be read as hidden background assumptions. On the contrary, they mark the precise points at which the present theorem restricts the class of admissible record sectors under consideration.

\subsection{Internal equivalence principle}

The profile spaces introduced in \cref{sec:profile-reduction} provide
the internal structural carrier on which record situations can be
compared. For this comparison to yield a non-arbitrary induced weight,
the weight must be constrained to depend only on that internally
accessible structure, not on outer redescription. The first condition
makes this requirement precise.

The first condition states that internally indistinguishable record situations must carry the same induced weight.

\begin{condition}[Internal equivalence principle]
\label{cond:internal-equivalence}
We say that an induced weight satisfies the \emph{internal equivalence
principle} if binary-profile-equivalent pairs carry the same induced
weight:
\[
\mathcal{D}(\Proj{R}\Psi) = \mathcal{D}(\Proj{R'}\Psi')
\quad \Longrightarrow \quad
\Wpsi{R} = W_{\Psi'}\!\left(R'\right).
\]
\end{condition}

\begin{remark}[Internal equivalence as representation invariance]
The internal equivalence principle is best read as an invariance
requirement on induced weight. It says that induced weight must depend
only on internally accessible admissible refinement structure, not on
external coding, ambient embedding, or other representational
surplus. Without such an internal-accessibility requirement, induced
weight in the present framework ceases to be fixed by admissible
refinement structure alone and becomes sensitive to representational
surplus.

In the present framework, binary-profile equivalence is the relevant
notion of internal sameness. The point is not merely that matching
profiles are necessary for internal sameness, but that admissible
binary refinement structure is the full internal structural carrier
that the present theorem allows induced weight to see. Once external
coding, ambient embedding, and other representational surplus are set
aside, no further internally accessible weight-relevant datum remains
within the theorem target beyond the admissible binary refinement
profile itself. If two record situations support the same admissible
binary refinement profiles, then no difference between them is
available at the structural level that the weight is meant to track.
A weight assignment that nevertheless distinguished them would fail to
be intrinsic to the record structure itself. Rejecting the principle
therefore comes at a definite price: one must allow induced weight to
depend on some further feature not fixed by the admissible internal
refinement profile, and hence on representational surplus rather than
on record structure alone. In this sense, binary-profile equivalence
is the natural internal equivalence for the present problem: it tracks
exactly the distinctions carried by the internally accessible
admissible refinement structure. Any finer relation would reintroduce
representational surplus, whereas any coarser relation would collapse
distinctions already present in the internal profile.

The principle is therefore selective rather than measure-theoretic. It
does not postulate a global measure on projectors. It constrains which
features of a record situation are allowed to matter for induced
weight.
\end{remark}

\begin{remark}[What the internal equivalence principle does and does not assume]
The internal equivalence principle does not assume that induced weight is a function of norm. The norm classification of profiles is a later consequence of admissible binary saturation, established in \cref{prop:norm-determines-profile}. What the principle does assume is profile-sufficiency: induced weight is fixed by the admissible binary refinement structure of the record situation and cannot depend on representational features not encoded in that profile.
\end{remark}

\subsection{Admissible binary saturation}

The second condition ensures that the admissible record sector is sufficiently saturated to realize the full family of norm-compatible binary refinements needed for the functional equation.

\begin{condition}[Admissible binary saturation]
\label{cond:binary-saturation}
We say that the admissible record layer satisfies \emph{admissible binary saturation} if for every global state $\Psi$, every robust record sector $R$, and every pair of non-negative numbers $r_1,r_2 \in \RR_{\geq 0}$ satisfying
\[
r_1^2 + r_2^2 = \norm{\Proj{R}\Psi}^2,
\]
there exists an admissible orthogonal refinement
\[
R = R_1 \oplus R_2
\]
such that
\[
\norm{\Proj{R_1}\Psi} = r_1,
\qquad
\norm{\Proj{R_2}\Psi} = r_2.
\]
\end{condition}

This condition is stronger than mere Hilbert-space decomposability. It is a statement about the admissible refinement class, not about arbitrary orthogonal splittings of a subspace. Its role is to ensure that the additive refinement law derived from continuation partition can be instantiated for every norm-compatible binary profile. The term ``binary saturation'' is meant to emphasize that the admissible refinement structure is saturated enough to realize every such binary profile within the stated scope.

\begin{remark}
Admissible binary saturation is not merely a scope restriction. It
identifies the structural threshold at which the quadratic assignment
becomes the only viable non-negative refinement-stable induced weight.
The theorem's contribution is precisely to isolate this threshold
explicitly.

The condition remains a genuine restriction on the domain of
application. Whether physically relevant systems in dimension greater
than two satisfy admissible binary saturation is an open structural
question that this paper does not resolve. Decoherence-stabilized
pointer sectors are natural candidates for the kind of record
structure the theorem targets, but whether such sectors realize every
norm-compatible binary refinement as an admissible orthogonal
decomposition depends on physical details that lie outside the present
scope.

The theorem should therefore be read as a threshold result: it
identifies the structural requirement that forces the quadratic
assignment and leaves open, as a separate empirical and structural
question, which systems meet that requirement in dimension greater
than two. The two-outcome spin example of
\cref{subsec:spin-example} illustrates the minimal binary setting
targeted by the theorem; it does not show how large that domain is.
\end{remark}

\begin{remark}[Why refinement richness matters]
\label{rem:equal-split-counterexample}
Binary saturation is not merely a technical convenience. To see this, consider a toy refinement class in which, for each total norm $s \geq 0$, the only admissible non-trivial binary refinement is the equal split
\[
  \left(\frac{s}{\sqrt{2}}, \frac{s}{\sqrt{2}}\right).
\]
Then the continuation-partition law yields only the relation
\[
  g(s) = 2\,g\!\left(\frac{s}{\sqrt{2}}\right)
  \qquad \text{for all } s \in \RR_{\geq 0},
\]
rather than the full quadratic functional equation.

This restricted relation does not force the quadratic form. For any fixed
$0 < \varepsilon < 1$, define
\[
  g_{\varepsilon}(0) := 0,
\]
and for $s > 0$,
\[
  g_{\varepsilon}(s)
  := s^2\!\left(1 + \varepsilon \sin\!\bigl(4\pi \log_2 s\bigr)\right).
\]
Then $g_{\varepsilon}$ is continuous and non-negative on
$\RR_{\geq 0}$, and it satisfies
\[
  g_{\varepsilon}(s)
  = 2\,g_{\varepsilon}\!\left(\frac{s}{\sqrt{2}}\right)
  \qquad \text{for all } s \in \RR_{\geq 0},
\]
since
\[
  \sin\!\bigl(4\pi(\log_2 s - \tfrac{1}{2})\bigr)
  = \sin\!\bigl(4\pi \log_2 s - 2\pi\bigr)
  = \sin\!\bigl(4\pi \log_2 s\bigr).
\]
But $g_{\varepsilon}$ is not of the form $c\,s^2$. Thus continuity alone
does not restore uniqueness when the admissible refinement class is too
sparse. What matters is sufficient refinement richness to force the full
functional equation.
\end{remark}

\subsection{Reduction to the functional equation}

We now collect the consequences of the previous section together with the two structural conditions stated above.

By \cref{cor:norm-reduced-weight}, once the internal equivalence principle is combined with admissible binary saturation, there exists a function
\[
g : \RR_{\geq 0} \to \RR_{\geq 0}
\]
such that
\[
\Wpsi{R} = g\!\left(\norm{\Proj{R}\Psi}\right)
\]
for every global state $\Psi$ and every robust record sector $R$.

Let
\[
R = R_1 \oplus R_2
\]
be an admissible orthogonal refinement. By the continuation partition lemma from the framework section,
\[
\Wpsi{R} = \Wpsi{R_1} + \Wpsi{R_2}.
\]
Using the norm-reduced form, this becomes
\[
g\!\left(\norm{\Proj{R}\Psi}\right)
=
g\!\left(\norm{\Proj{R_1}\Psi}\right)
+
g\!\left(\norm{\Proj{R_2}\Psi}\right).
\]
Since admissible orthogonal refinement satisfies the Pythagorean relation,
\[
\norm{\Proj{R}\Psi}^2
=
\norm{\Proj{R_1}\Psi}^2 + \norm{\Proj{R_2}\Psi}^2,
\]
we obtain
\[
g\!\left(\sqrt{
\norm{\Proj{R_1}\Psi}^2 + \norm{\Proj{R_2}\Psi}^2
}\right)
=
g\!\left(\norm{\Proj{R_1}\Psi}\right)
+
g\!\left(\norm{\Proj{R_2}\Psi}\right).
\]

Admissible binary saturation now allows this relation to be realized for every pair $r_1,r_2 \in \RR_{\geq 0}$ with admissible total norm. Since representational states are not assumed normalized at this stage, and since the admissible Hilbert record layer is closed under positive scalar multiples as stated in \cref{sec:framework}, the realized norm domain is \(\RR_{\geq 0}\). Therefore the following functional equation holds on all of $\RR_{\geq 0}$:
\begin{equation}
\label{eq:quadratic-functional-equation}
g\!\left(\sqrt{r_1^2+r_2^2}\right) = g(r_1) + g(r_2)
\qquad
\text{for all } r_1,r_2 \in \RR_{\geq 0}.
\end{equation}
Since induced weights are non-negative by definition, the function $g$ satisfies $g(r) \geq 0$ for all $r \in \RR_{\geq 0}$. This regularity condition will be used in the next section to exclude pathological solutions of the functional equation.

\subsection{Interpretive status of the conditions}

The argument has now been reduced to a single functional equation, but this reduction is only as strong as the two structural conditions that support it.

The internal equivalence principle is the deeper of the two. It expresses the claim that induced weight must be fixed by internally accessible refinement structure and not by representational surplus. Admissible binary saturation is more technical, but it is not merely cosmetic: it defines the class of record sectors on which the uniqueness result will hold.

Neither condition should be hidden in the prose of the proof. Their role is precisely to make explicit where the present route differs from standard derivations. The main theorem will show that, under these conditions, the quadratic assignment is forced. It will not claim that the conditions themselves hold in every conceivable representation layer.

\subsection{Transition to the main theorem}

At this point the framework, the profile reduction, and the two
structural conditions have reduced the problem to a single functional
equation. The next section uses this reduction to establish the
uniqueness of non-negative refinement-stable induced weights on robust
record sectors. The explicit quadratic form is the direct expression
of this uniqueness.

%% file: sections/05_main_theorem.tex

\section{Main Theorem}
\label{sec:main-theorem}

We now combine the results of the preceding sections to establish the
uniqueness of non-negative refinement-stable induced weights on robust
record sectors. The explicit quadratic form obtained below is the
direct expression of this uniqueness.

\begin{lemma}[Non-negative additive functions on $\RR_{\geq 0}$ are linear]
\label{lem:cauchy}
Let $f : \RR_{\geq 0} \to \RR_{\geq 0}$ satisfy
\[
  f(u + v) = f(u) + f(v)
  \qquad \text{for all } u, v \in \RR_{\geq 0}.
\]
Then there exists a constant $c \geq 0$ such that $f(x) = cx$ for all
$x \in \RR_{\geq 0}$.
\end{lemma}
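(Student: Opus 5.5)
The plan is the standard Cauchy argument, with non-negativity supplying monotonicity. Throughout, $f$ is the function in \cref{lem:cauchy}, additive on $\RR_{\geq 0}$ with values in $\RR_{\geq 0}$.

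\textbf{Step 1: $f(0)=0$.} Setting $u=v=0$ gives $f(0)=2f(0)$, so $f(0)=0$.

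\textbf{Step 2: linearity on the rationals.} By induction on $n$, additivity gives $f(nx)=nf(x)$ for every $n\in\mathbb{N}$ and $x\ge 0$. Applying this to $x=m/n$ gives $n f(m/n)=f(m)=m f(1)$. With $c:=f(1)\ge 0$, we obtain $f(q)=cq$ for every non-negative rational $q$.

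\textbf{Step 3: monotonicity.} This is the step where non-negativity replaces the usual measurability or continuity hypothesis. If $0\le u\le w$, then $f(w)=f(u)+f(w-u)\ge f(u)$, because $f(w-u)\ge 0$. Hence $f$ is non-decreasing.

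\textbf{Step 4: extension to all reals.} Fix $x\ge 0$ and choose rationals $q_k\uparrow x$ and $p_k\downarrow x$ with $0\le q_k\le x\le p_k$. At $x=0$ Step 1 already suffices, so this is only needed for $x>0$. Monotonicity gives
\[
  c\,q_k=f(q_k)\le f(x)\le f(p_k)=c\,p_k .
\]
Letting $k\to\infty$ squeezes $f(x)=cx$.

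The only point needing care is Step 3. Without a sign or regularity condition, additive functions can be pathological, non-linear solutions. Here the order structure of $\RR_{\geq 0}$, combined with $f\ge 0$, yields monotonicity directly, so no further hypothesis is needed. Everything else is routine.
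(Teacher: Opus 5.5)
Your proof is correct and follows essentially the same route as the paper: linearity on the non-negative rationals from additivity, monotonicity from $f \ge 0$, and a squeeze between rational sequences approaching $x$ from below and above. Your explicit derivation of $f(0)=0$ and of $f(m/n)=c\,m/n$ via $f(nx)=nf(x)$ is slightly more careful than the paper, which passes directly from $f(1/n)=c/n$ to all non-negative rationals.
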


\begin{proof}
Set $c := f(1) \geq 0$. For each positive integer $n$, additivity gives
$f(n) = nf(1) = cn$. Since $f(1) = f(n \cdot \tfrac{1}{n}) = n
f(\tfrac{1}{n})$, we obtain $f(\tfrac{1}{n}) = \tfrac{c}{n}$. Therefore
$f(q) = cq$ for every non-negative rational $q$.

Since $f$ is non-negative and additive, it is monotone: for $0 \leq x
\leq y$ we have $f(y) = f(x) + f(y - x) \geq f(x)$. For arbitrary $x
\in \RR_{\geq 0}$, choose rational sequences $q_k^- \uparrow x$ and
$q_k^+ \downarrow x$. Monotonicity gives
\[
  c q_k^- = f(q_k^-) \leq f(x) \leq f(q_k^+) = c q_k^+.
\]
Passing to the limit yields $f(x) = cx$.
\end{proof}

\begin{theorem}[Structural uniqueness of refinement-stable induced weight]
\label{thm:quadratic-uniqueness}
Assume:
\begin{enumerate}[label=(\roman*)]
\item an extensive bundle valuation $\mu$ on the family of admissible
  continuation bundles, satisfying finite additivity on disjoint
  admissible bundles, as specified in \cref{sec:framework};
\item the internal equivalence principle, \cref{cond:internal-equivalence};
\item admissible binary saturation, \cref{cond:binary-saturation}.
\end{enumerate}
Then the quadratic assignment is the only non-negative
refinement-stable induced weight on robust record sectors.
Equivalently, there exists a constant $c \geq 0$ such that for every
global state $\Psi$ and every robust record sector $R$,
\[
  \Wpsi{R} = c\,\norm{\Proj{R}\Psi}^2.
\]
\end{theorem}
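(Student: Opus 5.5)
The plan is to assemble the pieces already established and finish with the Cauchy lemma after a change of variables. First, by \cref{cor:norm-reduced-weight}, which uses the internal equivalence principle (\cref{cond:internal-equivalence}) together with admissible binary saturation (\cref{cond:binary-saturation}), there is a function $g:\RR_{\geq 0}\to\RR_{\geq 0}$ with $\Wpsi{R}=g(\norm{\Proj{R}\Psi})$ for every global state $\Psi$ and every robust record sector $R$. The corollary's positive-scaling argument also gives that the realized norm domain is all of $\RR_{\geq 0}$.

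Second, I will recover the functional equation \eqref{eq:quadratic-functional-equation}. Assumption (i), finite additivity of $\mu$, combined with the continuation partition lemma, yields refinement-stability $\Wpsi{R}=\Wpsi{R_1}+\Wpsi{R_2}$ for every admissible orthogonal refinement. For arbitrary $r_1,r_2\geq 0$, I will do the following:
\begin{enumerate}[label=(\alph*)]
\item choose a pair $(\Psi,R)$ with $\norm{\Proj{R}\Psi}=\sqrt{r_1^2+r_2^2}$, which is possible by scaling;
\item invoke saturation to produce an admissible refinement realizing the pair $(r_1,r_2)$;
\item use the Pythagorean relation.
\end{enumerate}
This gives $g(\sqrt{r_1^2+r_2^2})=g(r_1)+g(r_2)$ on all of $\RR_{\geq 0}^2$.

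Third, I will set $f(u):=g(\sqrt{u})$ for $u\geq 0$, so that $f:\RR_{\geq 0}\to\RR_{\geq 0}$. Substituting $r_i=\sqrt{u_i}$ turns the functional equation into $f(u_1+u_2)=f(u_1)+f(u_2)$. The Cauchy lemma, \cref{lem:cauchy}, then gives $f(u)=cu$ with $c=f(1)\geq 0$. Hence $g(r)=cr^2$, and therefore $\Wpsi{R}=c\norm{\Proj{R}\Psi}^2$.

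For the uniqueness phrasing, I will note two things. The constant $c$ is global, because $g$ is a single function independent of $(\Psi,R)$. Conversely, $c\norm{\Proj{R}\Psi}^2$ is non-negative and refinement-stable, by the Pythagorean identity for admissible refinements. So within the stated class, the quadratic family is exactly the set of admissible weights.

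The main obstacle is the step where the functional equation is asserted for all $(r_1,r_2)$. It requires that a single pair $(\Psi,R)$ of the prescribed total norm exists and admits a saturating refinement for every split. It also requires that $g(r_i)$ evaluated on the sub-sectors $R_i$ coincides with the same $g$, i.e.\ that $R_1,R_2$ are themselves robust record sectors, so that the corollary applies to them. I will check that this is guaranteed by clause (i) of \cref{def:admissible-refinement}. I will also check the degenerate case $r_2=0$, where consistency forces $g(0)=0$; this follows from $r_1=r_2=0$ in the equation.
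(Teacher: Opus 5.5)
Your proposal is correct and follows the paper's proof essentially step for step. You reduce to $g$ via \cref{cor:norm-reduced-weight}, obtain the functional equation \eqref{eq:quadratic-functional-equation} from continuation partition, saturation and positive scaling, and then apply \cref{lem:cauchy} to $f(u):=g(\sqrt{u})$ to get $g(r)=cr^2$.

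Your extra checks are sound additions that the paper leaves implicit: robustness of $R_1,R_2$ via clause (i) of \cref{def:admissible-refinement}, $g(0)=0$, and refinement-stability of the quadratic form. Note that the converse only shows $c\norm{\Proj{R}\Psi}^2$ is non-negative and refinement-stable, not that it is realized by some bundle valuation $\mu$. The theorem does not need that.
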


\begin{proof}
By \cref{cor:norm-reduced-weight}, there exists a function
$g : \RR_{\geq 0} \to \RR_{\geq 0}$ such that
\[
  \Wpsi{R} = g\!\left(\norm{\Proj{R}\Psi}\right)
\]
for every pair $(\Psi, R)$. By \cref{eq:quadratic-functional-equation},
$g$ satisfies
\[
  g\!\left(\sqrt{r_1^2 + r_2^2}\right) = g(r_1) + g(r_2)
  \qquad \text{for all } r_1, r_2 \in \RR_{\geq 0}.
\]
Define $f : \RR_{\geq 0} \to \RR_{\geq 0}$ by $f(x) := g(\sqrt{x})$.
Then for all $u, v \in \RR_{\geq 0}$,
\[
  f(u + v)
  = g\!\left(\sqrt{u + v}\right)
  = g\!\left(\sqrt{(\sqrt{u})^2 + (\sqrt{v})^2}\right)
  = g(\sqrt{u}) + g(\sqrt{v})
  = f(u) + f(v),
\]
so $f$ is additive on $\RR_{\geq 0}$. Since induced weights arise from
a non-negative extensive bundle valuation, $f(x) \geq 0$ for all
$x \geq 0$. By \cref{lem:cauchy}, there exists $c \geq 0$ such that
$f(x) = cx$.
Returning to $g$,
\[
  g(r) = f(r^2) = cr^2 \qquad \text{for all } r \in \RR_{\geq 0},
\]
and therefore
\[
  \Wpsi{R} = c\,\norm{\Proj{R}\Psi}^2.
\]
This is the quadratic form asserted. Since every non-negative
refinement-stable induced weight must satisfy this equation by the
preceding reduction, the quadratic assignment is the only such weight.
\end{proof}

In operational settings, one may regard exact realization of every norm-compatible binary split as stronger than needed: it is enough that admissible refinements approximate such splits arbitrarily well, provided the induced profile function varies continuously with the realized norm.

\begin{proposition}[Dense saturation under continuity]
\label{prop:dense-saturation}
Let $g:\RR_{\geq 0}\to\RR_{\geq 0}$ be continuous. Assume that for every $s\ge 0$ there exists a dense subset $A_s\subseteq[0,s]$ such that
\[
g(s)=g(t)+g\!\left(\sqrt{s^{2}-t^{2}}\right)
\qquad\text{for all } t\in A_s .
\]
Then there exists $c\ge 0$ such that
\[
g(r)=c\,r^{2}
\qquad\text{for all } r\ge 0.
\]
\end{proposition}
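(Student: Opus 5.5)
The plan is to upgrade the dense-set identity to an identity for every $t\in[0,s]$ using continuity, and then run the argument of \cref{thm:quadratic-uniqueness}. Fix $s\ge 0$ and $t\in[0,s]$. Since $A_s$ is dense in $[0,s]$, pick $t_k\in A_s$ with $t_k\to t$. The map $t\mapsto g(t)+g(\sqrt{s^2-t^2})$ is continuous on $[0,s]$, because $g$ is continuous and $t\mapsto\sqrt{s^2-t^2}$ is continuous there. It equals the constant $g(s)$ on $A_s$, so passing to the limit gives
\[
g(s)=g(t)+g\!\left(\sqrt{s^2-t^2}\right)\qquad\text{for all } t\in[0,s].
\]
The degenerate case $s=0$ is covered as well, since density in $\{0\}$ forces $A_0=\{0\}$.

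Next we recover \cref{eq:quadratic-functional-equation}. Given $r_1,r_2\ge 0$, set $s:=\sqrt{r_1^2+r_2^2}$ and $t:=r_1\in[0,s]$. Then $\sqrt{s^2-t^2}=r_2$, and the extended identity reads
\[
g\!\left(\sqrt{r_1^2+r_2^2}\right)=g(r_1)+g(r_2).
\]
From here we copy the proof of \cref{thm:quadratic-uniqueness}. Define $f(x):=g(\sqrt{x})$. Then $f:\RR_{\geq0}\to\RR_{\geq0}$ is additive, and \cref{lem:cauchy} gives $f(x)=cx$ with $c=f(1)\ge0$. Hence $g(r)=f(r^2)=cr^2$.

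The only step that needs real care is the density-to-everywhere extension. We must check that the limit is legitimate at the endpoints $t=0$ and $t=s$, which holds because the square-root map is continuous on the closed interval. We also use only continuity of $g$ and never assume monotonicity. Note that continuity makes \cref{lem:cauchy} more than is needed, since continuous additive functions are linear directly. Non-negativity nevertheless supplies $c\ge0$, and it keeps the argument parallel to the main theorem.

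This also shows why \cref{rem:equal-split-counterexample} does not conflict with the result. There the admissible split set for each $s$ is a single point, which is not dense in $[0,s]$, so continuity cannot propagate the identity to all $t$.
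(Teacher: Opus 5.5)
Your proof is correct and follows essentially the same route as the paper: extend the splitting identity from the dense set $A_s$ to all of $[0,s]$ by continuity, recover the functional equation for all $r_1,r_2\ge 0$, and apply the Cauchy lemma to $f(x)=g(\sqrt{x})$. The only cosmetic difference is the order of steps. The paper first passes to $f$ and performs the density extension on $B_s=\{t^2 : t\in A_s\}\subseteq[0,s^2]$ in the squared variable. You instead extend directly in $t$ and pass to $f$ afterwards.
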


\begin{proof}
Define $f:\RR_{\geq 0}\to\RR_{\geq 0}$ by
\[
f(x):=g(\sqrt{x}).
\]
Since $g$ is continuous, so is $f$. Fix $s\ge 0$ and set
\[
B_s:=\{t^{2}: t\in A_s\}\subseteq [0,s^{2}].
\]
Because $A_s$ is dense in $[0,s]$, the set $B_s$ is dense in $[0,s^{2}]$. For every $x\in B_s$, writing $x=t^{2}$ with $t\in A_s$, the assumed identity gives
\[
f(s^{2})=f(x)+f(s^{2}-x).
\]
The map
\[
x \mapsto f(x)+f(s^{2}-x)
\]
is continuous on $[0,s^{2}]$. Since
\[
f(s^{2})=f(x)+f(s^{2}-x)
\]
holds on the dense set $B_s$, it extends to all $x\in[0,s^{2}]$.
Hence, for arbitrary $u,v\ge 0$, taking $s=\sqrt{u+v}$ and $x=u$, we obtain
\[
f(u+v)=f(u)+f(v).
\]
Thus $f$ is additive on $\RR_{\geq 0}$. Since $f$ is non-negative, it is monotone on $\RR_{\geq 0}$, and by \cref{lem:cauchy} there exists $c\ge 0$ such that
\[
f(x)=c\,x
\qquad\text{for all } x\ge 0.
\]
Returning to $g$, we obtain
\[
g(r)=f(r^{2})=c\,r^{2}
\qquad\text{for all } r\ge 0.
\]
\end{proof}

This shows that full admissible binary saturation can be weakened to dense admissible saturation once continuity of the weight function $g$ is assumed.

\begin{remark}[Why continuity can be plausible]
Continuity of the profile function $g$ is not derived by the present
framework. In the dense-saturation route, it is an additional
regularity assumption. Its appeal is operational rather than
measure-theoretic: if admissible refinements approximate a
norm-compatible binary split arbitrarily well, then it is natural to
require that the induced weight should not undergo finite jumps under
arbitrarily small changes in the realized norm.

This makes continuity plausible when induced weight is meant to track
stable record structure rather than threshold-sensitive coding
artifacts. But the assumption should not be treated as free. If one
allows genuinely discontinuous dependence on norm, dense admissible
saturation need not determine the weight uniquely. Moreover, as
\cref{rem:equal-split-counterexample} shows, continuity alone still
does not suffice when the admissible refinement class is too sparse.
\end{remark}

\begin{corollary}[Born assignment as normalized realization]
\label{cor:normalized-quadratic-weight}
Assume in addition:
\begin{enumerate}[label=(\alph*)]
\item the induced weight is normalized: the total admissible next-step
  continuation bundle carries total induced weight $1$;
\item these sectors form a complete orthogonal decomposition of the
  relevant next-step content, so that
  \[
    \sum_i \norm{\Proj{R_i}\Psi}^2 = \norm{\Psi}^2 = 1.
  \]
\end{enumerate}
Then
\[
  \Wpsi{R} = \norm{\Proj{R}\Psi}^2
\]
for every global state $\Psi$ and every robust record sector $R$ in
the decomposition.
\end{corollary}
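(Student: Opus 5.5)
The plan is to combine \cref{thm:quadratic-uniqueness} with the two normalization hypotheses in order to pin the constant to $c=1$. The theorem's hypotheses (i)--(iii) remain in force, so it applies directly. It yields a single constant $c \geq 0$, independent of $\Psi$ and $R$, such that $\Wpsi{R} = c\,\norm{\Proj{R}\Psi}^2$ for every robust record sector $R$. What remains is to show that (a) and (b) force $c=1$.

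First, I would express the total weight as a sum of sector weights. Hypothesis (b) supplies a complete orthogonal decomposition into robust sectors $R_1,\dots,R_n$ of the relevant next-step content. I would treat this decomposition as built from iterated admissible binary refinements. Concretely, set $S_k := R_k \oplus \cdots \oplus R_n$ and split it as $S_k = R_k \oplus S_{k+1}$. Applying the continuation partition lemma at each step and using finite additivity of $\mu$, I would obtain
\[
\mu(\text{total bundle}) = \sum_i \mu\bigl(\Cpsi{R_i}\bigr) = \sum_i \Wpsi{R_i}.
\]
Hypothesis (a) sets the left-hand side equal to $1$. Substituting the quadratic form and then (b) gives
\[
1 = \sum_i c\,\norm{\Proj{R_i}\Psi}^2 = c\,\norm{\Psi}^2 = c.
\]
Hence $c=1$, and $\Wpsi{R_i} = \norm{\Proj{R_i}\Psi}^2$ for every sector $R_i$ in the decomposition. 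Because $c$ is global, this holds for every normalized $\Psi$ for which the hypotheses hold.

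The main obstacle is justifying the first equality: that the total admissible continuation bundle is the disjoint union of the bundles $\Cpsi{R_i}$. This needs the decomposition to be admissible at each binary stage, meaning each partial sum $S_k$ is itself a robust record sector and each split satisfies \cref{def:admissible-refinement}(iii). I would read "complete decomposition of the relevant next-step content" in (b) as exactly this assumption. The argument also assumes the total content is represented by a robust sector whose bundle is the total admissible bundle. If iterating binary refinements is not granted, I would instead appeal directly to finite additivity of $\mu$ over the pairwise disjoint bundles $\Cpsi{R_i}$. Disjointness comes from mutual exclusivity, and exhaustiveness is again read from (b). Either way, the sum identity is the only point where anything beyond bookkeeping enters.
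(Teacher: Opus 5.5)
Your proposal is correct and follows the paper's proof. You apply \cref{thm:quadratic-uniqueness} to get the global constant $c\geq 0$, write the total next-step weight as $\sum_i \Wpsi{R_i} = c\sum_i \norm{\Proj{R_i}\Psi}^2 = c$ using (b), and conclude $c=1$ from (a). The only difference is that you explicitly justify the identity between the total bundle weight and $\sum_i \Wpsi{R_i}$ (via iterated admissible binary refinements or finite additivity over the disjoint bundles $\Cpsi{R_i}$), whereas the paper asserts that step without comment.
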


\begin{proof}
By \cref{thm:quadratic-uniqueness}, the induced weight has the form
$\Wpsi{R} = c\,\norm{\Proj{R}\Psi}^2$ with a constant $c \geq 0$.
By assumption~(b), the squared norms of the projected components sum
to $\norm{\Psi}^2 = 1$. The total induced weight of the admissible
next-step continuation bundle therefore equals
\[
  \sum_i \Wpsi{R_i} = c \sum_i \norm{\Proj{R_i}\Psi}^2 = c.
\]
By assumption~(a), this total equals $1$, so $c = 1$. Therefore
\[
  \Wpsi{R} = \norm{\Proj{R}\Psi}^2.
\]
\end{proof}

\begin{remark}
The theorem does not construct a probability measure on the full projector lattice of \(\HH\). It establishes a uniqueness result only for induced weights on robust record sectors inside an admissible Hilbert record layer.
\end{remark}

\begin{remark}
The proof uses no appeal to Gleason's theorem, no decision-theoretic axiom, and no envariance argument. The quadratic assignment is forced directly by sector-level continuation partition, norm-profile reduction, and the two structural conditions isolated in \cref{sec:structural-conditions}.
\end{remark}

\begin{remark}[Compression viewpoint]
The theorem can also be read as a compression statement. Once admissible induced weights are reduced to a norm-profile function $g$, the entire admissible weight structure collapses to the one-parameter form
\[
  g(r) = c\,r^2.
\]
No further dependence on the internal realization of a robust record sector survives beyond the norm of its projected component. Under the normalization condition of \cref{cor:normalized-quadratic-weight}, even the constant $c$ is fixed, so the induced weight is determined uniquely by squared norm alone.
\end{remark}

The next section clarifies the exact scope of this result, the sense in which it is conditional, and the limits of what has and has not been shown.

%% file: sections/06_scope_and_limits.tex
\section{Scope and Limits}
\label{sec:scope-limits}

The result established in \cref{thm:quadratic-uniqueness} is deliberately conditional and tightly focused. Its point is not to solve every surrounding foundational problem at once, but to isolate the precise structural threshold at which the quadratic assignment becomes unavoidable. In that sense, the narrowness of the theorem is part of its content: it identifies exactly which assumptions do the forcing work and exactly where the uniqueness claim begins.

\subsection{What the theorem establishes}

The main theorem shows that, within an admissible Hilbert record layer, and under the two structural conditions stated in \cref{cond:internal-equivalence,cond:binary-saturation}, the induced weight on robust record sectors is forced to take the quadratic form
\[
\Wpsi{R} = c\,\norm{\Proj{R}\Psi}^2.
\]
With the additional normalization stated in \cref{cor:normalized-quadratic-weight}, this becomes
\[
\Wpsi{R} = \norm{\Proj{R}\Psi}^2.
\]

The content of the theorem is therefore not that one may \emph{choose} the quadratic assignment, but that, once the stated structural threshold is met, no other non-negative refinement-stable assignment remains compatible with the framework on the stated class of record sectors.

\subsection{What the theorem does not establish}

It is equally important to state what has \emph{not} been shown.

First, the theorem does not derive a probability measure on the full projector lattice of \(\HH\). The argument is restricted from the outset to robust record sectors and admissible orthogonal refinements among them. This restriction is essential to the logic of the paper.

Second, the theorem does not derive Hilbert structure itself. The admissible representation layer is assumed to have Hilbert-sector form. The present argument then shows what weight assignment is forced \emph{within} that layer. The deeper question of why admissible representation layers should be Hilbertian is a separate foundational problem. This is not an evasion but a deliberate scope decision: the present argument is designed to show what is forced \emph{within} a Hilbert record layer, independently of whichever deeper justification one might accept for that layer. The conditional structure makes the result applicable across multiple foundational positions on the Hilbert-structure question.

Third, the theorem does not claim that every orthogonal splitting of a Hilbert subspace is physically meaningful. Only admissible orthogonal refinements enter the argument. Admissibility is stronger than orthogonality: it requires faithful representation of the observable next-step exclusivity structure of the record sector being refined.

Fourth, the theorem does not eliminate the need for substantive structural assumptions. On the contrary, it makes them explicit. The proof depends essentially on the internal equivalence principle and on admissible binary saturation. These are not cosmetic additions. They define the domain on which the uniqueness claim is valid.

Fifth, the theorem does not establish that physically relevant systems in dimension greater than two fall within its domain of application. The two-outcome spin example of \cref{rem:spin-example} illustrates only the minimal two-dimensional setting targeted by the framework. Whether robust record sectors satisfying admissible binary saturation, or the weaker dense saturation condition of \cref{prop:dense-saturation} together with continuity, exist for physically relevant systems in higher dimensions remains an open question. The theorem identifies the structural threshold that forces the quadratic assignment; it does not determine which physical systems realize that threshold. This is the precise sense in which the result is conditional: not merely that it assumes its premises, but that the physical scope of those premises remains to be mapped.

The conditional structure of the argument can be summarized by the following failure map.

\begin{center}
\renewcommand{\arraystretch}{1.15}
\begin{tabular}{%
  >{\raggedright\arraybackslash}p{0.24\linewidth}
  >{\raggedright\arraybackslash}p{0.28\linewidth}
  >{\raggedright\arraybackslash}p{0.34\linewidth}}
\textbf{If this is removed} & \textbf{What remains} & \textbf{What is lost} \\[0.3em]
Extensive bundle valuation with finite additivity on disjoint continuation bundles
& admissible sectors and refinements may still be defined
& no sector-level additive law, hence no route to the functional equation \\[0.5em]
Internal equivalence
& refinement additivity may still hold on admissible splits
& no reduction to a one-variable profile function $g$, hence no scalar uniqueness problem \\[0.5em]
Sufficient admissible refinement richness
& only partial additivity relations on the realized refinement class
& no uniqueness of the induced weight in general; see \cref{rem:equal-split-counterexample} \\[0.5em]
Normalization
& the quadratic form $\Wpsi{R} = c\,\norm{\Proj{R}\Psi}^2$
& the constant $c$ remains undetermined, so the Born assignment is not yet fixed
\end{tabular}
\end{center}

The framework also carries four distinct structural loads. First,
exclusivity of record alternatives ensures that continuation bundles
form a canonical disjoint carrier for $\mu$; without it, finite
additivity loses its unambiguous domain of application. Second,
continuation structure with short-horizon stability makes
$W_\Psi(R)=\mu(C_\Psi(R))$ genuinely induced rather than directly
postulated; without it, sector weight becomes primitive and the
problem type changes. Third, internal refinement richness supplies the
non-trivial profile structure needed for the later functional
equation; without sufficient richness, profile equivalence becomes too
sparse to constrain induced weight. Fourth, internal accessibility of
admissible refinement structure is what prevents induced weight from
depending on representational surplus; without it, the internal
equivalence principle loses its ground.

\subsection{Why the conditional form matters}

The value of the conditional form is that it makes the threshold
structure of the result explicit while keeping mathematical
uniqueness separate from physical applicability. The theorem settles
the former on its stated domain and leaves the latter to the
structural and empirical question of which systems instantiate the
required record structure.

\subsection{The role of the internal equivalence principle}

The internal equivalence principle, \cref{cond:internal-equivalence},
is the deeper of the two structural conditions. It requires induced
weight to be fixed by internally accessible admissible refinement
structure rather than by representational surplus. Rejecting it means
allowing internally indistinguishable record situations to carry
different induced weights, so that weight is no longer fixed by
admissible refinement structure alone.

\subsection{The role of admissible binary saturation}

Admissible binary saturation, \cref{cond:binary-saturation}, is the
richness condition that turns continuation-partition additivity into a
constraining functional equation for \(g\). It should not be read as a
demand that arbitrary orthogonal splittings be physically realizable,
but as the requirement that a robust record sector carry enough stable
internal differentiation to realize norm-compatible binary
refinements, exactly or at least densely. Without such refinement
richness, the induced weight need not be uniquely determined. As
\cref{prop:dense-saturation} shows, full admissible binary saturation
may be weakened to dense admissible saturation if continuity of the
profile function is added.

\subsection{No ontological ranking of branches}

The theorem should also not be read as assigning different \emph{degrees of reality} to different record sectors. What has been established is a uniqueness result for induced weights under refinement stability conditions. That is a structural statement about admissible weight assignments, not an ontological ranking theorem.

For that reason, the language of this paper remains deliberately neutral. It speaks of induced weight on robust record sectors, not of branch reality, branch substance, or metaphysical thickness. In that sense, the theorem is interpretation-neutral: it does not privilege Everettian branching over collapse-based, relational, or pragmatist readings, provided they admit the relevant record structure on the stated domain.

\subsection{Why the restriction to record sectors is substantive}

One might worry that restricting the theorem to robust record sectors makes the result too narrow. We take the opposite view. The restriction is exactly what gives the theorem a distinct foundational target.

The present question is not how to assign a measure to arbitrary formal subspaces, but which weight assignments remain viable on structures that can function as stable, internally accessible records. That is a sharper and more physically motivated question. The resulting theorem is narrower than a global measure theorem, but also closer to the problem of stable internal record structure.

\subsection{A schematic decoherence-stabilized record model}
\label{subsec:decoherence-schematic}

The following sketch illustrates how the framework of this paper can
be read in a setting closer to physical practice. It does not derive
decoherence, does not establish a pointer basis, and does not prove
that admissible binary saturation holds for any specific physical
system. The theorem itself does not depend on any specific
decoherence model; the present sketch only exhibits one physically
familiar way of reading the formal notions introduced above. Its
purpose is only to make the abstract framework \emph{physically
legible} in a standard schematic setting.

\emph{Setup.}
Consider a finite-dimensional system $S$ coupled to an environment
$E$, and suppose that the interaction selects a decoherence-stabilized
record structure with macroscopically distinct alternatives indexed by
$i$ \cite{Zurek2003Decoherence,Schlosshauer2007Decoherence}.
Rather than identifying record sectors with one-dimensional
pointer states, let $R_i$ denote coarse-grained subspaces associated
with stable record alternatives of the composite $SE$-system, or of an
effective record-bearing subsystem. For present purposes, these
sectors are treated as an idealized exactly orthogonal and
short-horizon stable decomposition, even though in realistic
decoherence models such structure is typically only approximate.
In this setting,
\cref{def:robust-record-sector} receives the following reading:
\emph{internal discriminability} corresponds to the record alternatives
being operationally distinguishable, \emph{short-horizon persistence}
corresponds to their stability over the timescale relevant to the next
step of record use, and \emph{admissible refinement closure}
corresponds to the possibility of resolving a coarse record
alternative into finer stable sub-records.

\emph{Admissible continuations and refinements.}
The continuation bundles $\Cpsi{R_i}$ can then be read as the class of
post-interaction continuations of the composite dynamics in which the
record content remains consistent with the sector $R_i$. An admissible
orthogonal refinement
\[
R_i = R_{i,1} \oplus R_{i,2}
\]
corresponds schematically to a further stable discrimination within
the same coarse record alternative, provided the refined sectors are
again mutually exclusive and robust enough to function as record
bearers.

\emph{What is structurally set, what is plausible, what remains open.}
What is \emph{set} by this schematic is only the reading of robust
record sectors as decoherence-stabilized record subspaces, of
continuation bundles as record-consistent continuations, and of
admissible refinements as further stable discriminations within a
coarse record structure.
What is \emph{plausible} is that such a
setting may plausibly support a sufficiently rich family of admissible
refinements,
and that dense admissible saturation may therefore be
more realistic than full saturation when exact realization of every
norm-compatible split is operationally too strong. What remains
\emph{open} is whether admissible binary saturation, or even the
weaker dense saturation condition of \cref{prop:dense-saturation},
holds for any concrete physical system. The theorem identifies the
structural threshold, the schematic makes it physically legible, and
the gap between legibility and certification remains a separate
empirical and structural question.

\begin{remark}[A concrete candidate regime for dense saturation]
A plausible physical regime for the weaker dense-saturation condition is a high-dimensional decoherence-stabilized record sector with redundant environmental encoding \cite{OllivierPoulinZurek2004,Zurek2009QuantumDarwinism}. One should think here not of a one-dimensional pointer state, but of a coarse macroscopic record, such as a detector outcome or pointer reading, that is stably carried by many microscopically distinct apparatus-environment configurations.

In such a setting, an admissible refinement need not arise from arbitrary one-dimensional splittings. It can instead arise from further stable discrimination within the same coarse record sector by grouping families of robust micro-record subspaces that preserve the same coarse record content. When the number of such approximately independent record-bearing components is large, and when the sector admits sufficiently fine robust sub-record decompositions whose binary groupings remain admissible, \cref{prop:local-dense-saturation-criterion} provides a concrete route by which the realized admissible binary refinement norms can become dense in the interval allowed by the total sector norm. As in the schematic model of \cref{subsec:decoherence-schematic}, exact orthogonality of the sub-records is an idealization; in practice the relevant structure is only approximately orthogonal, and the dense-saturation condition is stated here as applying to the idealized version.

This does not establish dense admissible saturation for any concrete model, and it does not show that every norm-compatible split is physically realizable. Its role is narrower: it identifies a non-trivial physical regime in which the dense version of the saturation condition is at least credible. The natural candidate regime is that of redundant macroscopic records supported by many approximately orthogonal environmental subrecords in a large apparatus-environment sector.
\end{remark}

\subsection{A local structural criterion for dense admissible saturation}

The candidate regime just described can be sharpened into a local
structural criterion. The point is not to certify a concrete physical
model, but to isolate an explicit microstructural hypothesis under
which dense admissible saturation follows for a given pair
\((\Psi,R)\).

\begin{proposition}[Local criterion for dense admissible saturation]
\label{prop:local-dense-saturation-criterion}
Fix a global state \(\Psi\) and a robust record sector \(R\), and write
\[
s := \norm{\Proj{R}\Psi}.
\]
Assume that there exists a sequence of finite orthogonal decompositions
\[
R = \bigoplus_{i=1}^{N_n} S_i^{(n)}
\qquad (n\in\mathbb{N})
\]
such that each \(S_i^{(n)}\) is itself a robust record sector, and that
the following hold for every \(n\):
\begin{enumerate}[label=(\roman*)]
\item for every subset \(A \subseteq \{1,\dots,N_n\}\), the grouped sectors
\[
R_A^{(n)} := \bigoplus_{i\in A} S_i^{(n)},
\qquad
R_{A^c}^{(n)} := \bigoplus_{i\notin A} S_i^{(n)}
\]
form an admissible orthogonal refinement of \(R\);
\item with
\[
w_i^{(n)} := \norm{\Proj{S_i^{(n)}}\Psi}^2,
\]
one has
\[
\max_{1\le i\le N_n} w_i^{(n)}
\le
\varepsilon_n \norm{\Proj{R}\Psi}^2
\]
for some sequence \(\varepsilon_n \to 0\).
\end{enumerate}
Then the set of realized norm values
\[
D_{\Psi,R}
:=
\left\{
\norm{\Proj{R_A^{(n)}}\Psi}
\;\middle|\;
n\in\mathbb{N},\;
A\subseteq\{1,\dots,N_n\}
\right\}
\]
is dense in \([0,s]\). In particular, if the same criterion holds for
every pair \((\Psi,R)\) in the relevant admissible domain, then the
dense-realizability hypothesis used in \cref{prop:dense-saturation} is
satisfied on that domain.
\end{proposition}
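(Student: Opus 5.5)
The plan is to reduce density of realized norms to density of realized squared norms, and then handle the latter by a greedy partial-sum argument. Since each decomposition is orthogonal, for every $n$ and every $A\subseteq\{1,\dots,N_n\}$ we have
\[
\norm{\Proj{R_A^{(n)}}\Psi}^2=\sum_{i\in A} w_i^{(n)},
\qquad
\sum_{i=1}^{N_n} w_i^{(n)} = s^2 .
\]
This holds because $\Proj{R}\Psi=\sum_i \Proj{S_i^{(n)}}\Psi$ with mutually orthogonal summands. Here I use only the Pythagorean identity already recorded for admissible refinements, extended to finite orthogonal sums. Because $x\mapsto\sqrt{x}$ is a continuous surjection $[0,s^2]\to[0,s]$, it maps dense sets to dense sets. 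So it suffices to show that the set of subset sums $\{\sum_{i\in A} w_i^{(n)}\}$ is dense in $[0,s^2]$. The case $s=0$ is trivial, since then $D_{\Psi,R}=\{0\}=[0,s]$; I will dispose of it first.

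For $s>0$, I fix a target $x\in[0,s^2]$ and a tolerance $\delta>0$. I then choose $n$ with $\varepsilon_n s^2<\delta$, which is possible by hypothesis (ii). Order the indices $1,\dots,N_n$ and take the partial sums
\[
\sigma_k:=\sum_{i\le k} w_i^{(n)} .
\]
These increase from $\sigma_0=0$ to $\sigma_{N_n}=s^2$, and each increment satisfies $w_{k}^{(n)}\le \varepsilon_n s^2<\delta$. Let $k$ be the largest index with $\sigma_k\le x$. Then $0\le x-\sigma_k<\delta$: either $k=N_n$, in which case $x=s^2$, or $\sigma_{k+1}>x$. Taking $A=\{1,\dots,k\}$ gives a realized squared norm within $\delta$ of $x$. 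Hypothesis (i) is what makes this $A$ legitimate, since it guarantees that $R_A^{(n)}\oplus R_{A^c}^{(n)}$ is an admissible orthogonal refinement. The corresponding norm therefore lies in the admissible binary profile of $(\Psi,R)$ and is not merely a formal orthogonal splitting.

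For the ``in particular'' clause, I will set $A_s:=D_{\Psi,R}$. For every $t\in A_s$, the pair $\bigl(t,\sqrt{s^2-t^2}\bigr)$ is realized by the admissible refinement $R_A^{(n)}\oplus R_{A^c}^{(n)}$. Refinement-stability together with the norm-reduced form from \cref{cor:norm-reduced-weight} then yields
\[
g(s)=g(t)+g\!\left(\sqrt{s^2-t^2}\right).
\]
To obtain such an $A_s$ for every $s\ge 0$, I apply the criterion to some pair $(\Psi,R)$ whose projected norm is $s$; positive-scaling closure shows that every $s$ is attained. Here I must flag a subtlety: \cref{cor:norm-reduced-weight} as stated uses full saturation, so in the dense route the existence of $g$ must be taken as given, as \cref{prop:dense-saturation} itself does. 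The density argument itself is routine. The step I expect to require the most care is this bookkeeping that the realized subset sums genuinely correspond to admissible profiles, together with the square-root transfer of density, rather than the approximation estimate.
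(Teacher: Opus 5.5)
Your proof is correct and follows the paper's route: orthogonality turns realized squared norms into subset sums of the $w_i^{(n)}$, every target in $[0,s^2]$ is approximated to within $\max_i w_i^{(n)}\le\varepsilon_n s^2$, and density transfers to $[0,s]$ through the square root. Your prefix-sum choice $A=\{1,\dots,k\}$ is a more constructive version of the paper's choice of a subset with maximal sum not exceeding $x$, and it gives the same bound $0\le x-x_A<\max_i w_i^{(n)}$. Your treatment of the \emph{in particular} clause goes beyond what the paper writes out, and your caveat is accurate: \cref{cor:norm-reduced-weight} rests on full saturation, so in the dense route the norm-reduced $g$ must be taken as given, exactly as \cref{prop:dense-saturation} does.
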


\begin{proof}
If \(s=0\), then \(D_{\Psi,R}=\{0\}\), so there is nothing to prove.
Assume \(s>0\), and set
\[
W := s^2 = \norm{\Proj{R}\Psi}^2.
\]
For each \(n\) and each subset \(A \subseteq \{1,\dots,N_n\}\),
orthogonality gives
\[
\norm{\Proj{R_A^{(n)}}\Psi}^2
=
\sum_{i\in A} w_i^{(n)}.
\]
Fix \(x\in[0,W]\). If \(x=W\), choose \(A=\{1,\dots,N_n\}\), so there
is nothing to prove. Assume \(x<W\), and for each \(n\) choose among
all subsets \(A\) satisfying
\[
\sum_{i\in A} w_i^{(n)} \le x
\]
one for which the sum is maximal. Write
\[
x_A^{(n)} := \sum_{i\in A} w_i^{(n)}.
\]
We claim that
\[
0 \le x - x_A^{(n)} < \max_i w_i^{(n)}.
\]
Indeed, if \(x - x_A^{(n)} \ge \max_i w_i^{(n)}\), then \(A\) cannot be
the full index set, since \(x_A^{(n)}<W\). Hence there exists
\(j\notin A\), and for every such \(j\) one has
\[
w_j^{(n)} \le \max_i w_i^{(n)} \le x - x_A^{(n)},
\]
so
\[
x_A^{(n)} + w_j^{(n)} \le x,
\]
contradicting the maximality of \(x_A^{(n)}\). Therefore
\[
0 \le x - x_A^{(n)} < \max_i w_i^{(n)}
\le
\varepsilon_n W.
\]
Since \(\varepsilon_n \to 0\), the realized quadratic sums are dense in
\([0,W]\). Because the square-root map is a homeomorphism from
\([0,W]\) to \([0,s]\), the realized norm values in \(D_{\Psi,R}\) are
dense in \([0,s]\).
\end{proof}

\begin{remark}[A worked finite illustration of the local criterion]
The mechanism behind \cref{prop:local-dense-saturation-criterion} can
be seen in a simple finite example. Let \(R\) be a robust record
sector relative to \(\Psi\) with
\[
\norm{\Proj{R}\Psi}^2 = 1,
\]
and suppose that
\[
R = S_1 \oplus S_2 \oplus S_3 \oplus S_4
\]
is a finite orthogonal decomposition in which each \(S_i\) is itself a
robust record sector and every binary grouping of these four sectors
forms an admissible orthogonal refinement of \(R\). If
\[
w_1 = 0.40,\qquad
w_2 = 0.30,\qquad
w_3 = 0.20,\qquad
w_4 = 0.10,
\]
then the realized quadratic shares are exactly the subset sums
\[
0,\ 0.10,\ 0.20,\ 0.30,\ 0.40,\ 0.50,\ 0.60,\ 0.70,\ 0.80,\ 0.90,\ 1.
\]
This set is not dense in \([0,1]\), but it makes the combinatorial
mechanism explicit: admissible binary groupings of finer internal
record components generate realized norm splits through subset sums of
their projected weights.

If one passes instead to a sequence of finer admissible decompositions
for which the largest projected weight tends to zero, then
\cref{prop:local-dense-saturation-criterion} implies that the realized
quadratic shares become dense in \([0,1]\), and therefore the realized
norm values become dense in \([0,\norm{\Proj{R}\Psi}]\) as well.

This is only an illustration of the mechanism. It does not certify
that any concrete physical apparatus realizes the required admissible
refinement structure. Its role is narrower: it shows explicitly how
dense admissible saturation can arise from sufficiently fine robust
internal sub-record structure once admissible binary groupings are
available.
\end{remark}

\subsection{A toy structural example: two-outcome spin system}
\label{subsec:spin-example}

The following example is purely structural. It does not derive a
pointer basis from decoherence, and it does not claim to settle any
interpretive question about quantum measurement. Its sole purpose is
to verify that the definitions of \cref{sec:framework} are non-empty
and that the theorem delivers the expected result in the simplest
possible case.

\begin{remark}[Toy structural example]
\label{rem:spin-example}
Let $\HH = \mathbb{C}^2$ and let
\[
  \Psi = \alpha\,|\uparrow\rangle + \beta\,|\downarrow\rangle,
  \qquad
  |\alpha|^2 + |\beta|^2 = 1.
\]
Define two robust record sectors by
\[
  R_\uparrow := \mathrm{span}\{|\uparrow\rangle\},
  \qquad
  R_\downarrow := \mathrm{span}\{|\downarrow\rangle\}.
\]

\emph{Verification of \cref{def:robust-record-sector}.}
Each of $R_\uparrow$ and $R_\downarrow$ is a one-dimensional closed
subspace of $\HH$. They represent mutually exclusive, internally
readable record alternatives: no admissible continuation can realize
both simultaneously. Each subspace is stable under small
representational perturbations in the sense that its one-dimensional
span is not disrupted by admissible micro-recodings. The pair
$\{R_\uparrow, R_\downarrow\}$ forms an admissible exclusivity
structure in which $R_\uparrow \oplus R_\downarrow = \HH$.

\emph{Admissible orthogonal refinement.}
The decomposition $\HH = R_\uparrow \oplus R_\downarrow$ is an
admissible orthogonal refinement of the full next-step sector relative
to $\Psi$: every admissible continuation falls under exactly one of
$R_\uparrow$ or $R_\downarrow$, and no continuation remains outside
this partition. The projected components are
\[
  \phi_{R_\uparrow} = \Proj{R_\uparrow}\Psi = \alpha\,|\uparrow\rangle,
  \qquad
  \phi_{R_\downarrow} = \Proj{R_\downarrow}\Psi = \beta\,|\downarrow\rangle,
\]
with $\norm{\phi_{R_\uparrow}}^2 = |\alpha|^2$ and
$\norm{\phi_{R_\downarrow}}^2 = |\beta|^2$.

\emph{Binary saturation.}
In this two-dimensional setting the decomposition
$\HH = R_\uparrow \oplus R_\downarrow$ is the only non-trivial
admissible binary refinement singled out by the fixed record basis.
The example is therefore not meant to establish admissible binary
saturation in general. Its purpose is only to illustrate the minimal
binary setting in which the theorem applies once admissible binary
saturation is assumed for the sector under consideration.

\emph{Conclusion from \cref{thm:quadratic-uniqueness,cor:normalized-quadratic-weight}.}
By \cref{thm:quadratic-uniqueness}, the only non-negative
refinement-stable induced weight takes the form
$\Wpsi{R} = c\,\norm{\Proj{R}\Psi}^2$.
Since $R_\uparrow$ and $R_\downarrow$ form a complete orthogonal
decomposition with $|\alpha|^2 + |\beta|^2 = 1$,
\cref{cor:normalized-quadratic-weight} gives $c = 1$. Therefore
\[
  \Wpsi{R_\uparrow} = |\alpha|^2,
  \qquad
  \Wpsi{R_\downarrow} = |\beta|^2.
\]
This is the expected Born assignment. The example shows that the
structural apparatus of the paper is non-vacuous: the definitions are
satisfiable, the conditions are checkable, and the theorem delivers
the correct quadratic assignment in this minimal case.
\end{remark}

\subsection{Transition to the literature comparison}

The scope of the theorem can now be stated succinctly. Within an admissible Hilbert record layer, and under explicit structural conditions tied to internal equivalence and sufficient admissible refinement richness, the quadratic assignment is the only non-negative refinement-stable induced weight on robust record sectors.

The next section compares this route with several established families of Born-rule arguments and identifies the precise logical point at which the present approach differs from them.

%% file: sections/07_relation_to_existing_routes.tex
\section{Relation to Existing Routes}
\label{sec:existing-routes}

The present theorem belongs to a familiar landscape of attempts to account for the quadratic quantum weight, but its logical target and its carrier structure differ from the standard routes. The comparison is therefore best made not only at the level of conclusion, but at the level of what each route takes as primary, where its additive or constraining structure is placed, and what each route is trying to prove.

\subsection{Measure-theoretic uniqueness routes}

A natural comparison point is the family of Gleason-type results. Gleason's theorem shows that, on Hilbert spaces of dimension at least three, sufficiently well-behaved additive measures on the lattice of projections are represented by density operators \cite{Gleason1957}. Busch's extension shifts the setting from projection-valued assignments to generalized observables and thereby removes the qubit restriction \cite{Busch2003}.

These results are mathematically powerful, but they begin with a global measure-theoretic target. Additivity is built into the starting point as a property of the measure on projections or effects. The present paper does not replace that route, and it does not attempt to subsume it. It proves a different uniqueness theorem with a different theorem target and a different additive carrier. Its target is narrower, namely induced weights on robust record sectors only. Its additive primitive is not a measure on the full projector lattice, but finite additivity on disjoint continuation bundles through an extensive bundle valuation. The sector-level additive law is then inherited from continuation partition under admissible refinement. The difference therefore runs through the object being weighted, the structure that carries additivity, and the level at which the uniqueness question is posed.

This difference in additive carrier was already built into the framework level in \cref{sec:framework} and is here restated at the level of theorem comparison.

A recent critical analysis by Zhang argues, from a different angle, that additivity is indispensable across several major Born-rule derivations and cannot be eliminated in favour of weaker assumptions such as non-contextuality and normalization alone \cite{Zhang2026Additivity}. That diagnosis is broadly compatible with the present result, but the present paper sharpens the structural point in a different way: it does not treat additivity as a global measure-theoretic postulate on the projector lattice, but relocates the additive primitive to disjoint continuation bundles and then identifies the further structural conditions under which quadratic weight is forced on robust record sectors.

\subsection{Decision-theoretic Everettian routes}

A second major route is the Everettian decision-theoretic program initiated by Deutsch and developed in more formal detail by Wallace \cite{Deutsch1999,Wallace2010,Wallace2012}. In that framework, the Born rule is connected to constraints on rational preference or rational betting behaviour in branching quantum situations. Critical discussion of that route has emphasized both its formal ingenuity and the non-trivial role played by its rationality assumptions \cite{Kent2010}.

The present theorem does not appeal to rational choice, utility, diachronic consistency, or betting behaviour. It is not a theorem about how agents \emph{should} apportion credence in a branching setting. It is a theorem about which non-negative weight assignments are structurally admissible on robust record sectors under explicit refinement-stability conditions. The contrast is therefore not only between normative and non-normative language, but between two different theorem targets: rational credence constraints on the one hand, and refinement-stable induced weight on the other.

\subsection{Envariance and symmetry-based routes}

A third route is Zurek's envariance program, which seeks to derive the Born rule from symmetry properties of entangled states \cite{Zurek2005}. That approach has been influential precisely because it attempts to replace probabilistic postulates with a more primitive symmetry argument. At the same time, critical analyses have argued that substantial assumptions remain active in the derivation, especially in the passage from symmetric situations to probability assignments \cite{SchlosshauerFine2005}.

The present route does not rely on entanglement symmetry, swap invariance, or environment-assisted invariance. The core functional equation arises instead from two different ingredients: sector-level continuation partition under admissible refinement, and the reduction of induced weight to the norm profile of projected record components. The active invariance principle here is therefore not a symmetry of entangled states, but invariance of induced weight under internally indistinguishable admissible refinement structure.

\subsection{Self-locating uncertainty and evidential routes}

Another line of work approaches the Born rule through self-locating uncertainty and closely related questions about confirmation in Everettian settings \cite{SebensCarroll2014,GreavesMyrvold2010}. These approaches ask how an observer should reason when branching has occurred or is about to occur, and how observed frequencies can count as evidence in a theory with multiple successor observers.

The present theorem is logically prior to such questions. It does not begin with post-branch credence, self-location, or theory confirmation. Its concern is earlier and more structural: which weight assignments are compatible with internally stable record refinement in the first place. On the present route, evidential and self-locating questions arise only after the admissible weight form has already been fixed.

\subsection{Operational reconstruction routes}

A further comparison is with operational reconstructions in which the measurement postulates are argued to be fixed by the rest of quantum structure together with additional operational principles. A notable recent example is the claim that the measurement postulates of quantum mechanics are operationally redundant \cite{MasanesGalleyMuller2019}. That line differs from the present one both in scope and in target: it addresses the structure of measurements and state update more broadly, whereas the present paper isolates only the uniqueness of induced weight on robust record sectors. The present result is therefore not a reconstruction of measurement theory, but a structural uniqueness theorem for record-sector weight once the admissible framework is fixed. The subsequent critical discussion of hidden assumptions in such reconstructions also underscores the importance of making the active structural conditions fully explicit \cite{Stacey2022}.

\subsection{Normative Bayesian routes}

Finally, there are approaches, most prominently in QBism, that do not treat the Born rule as an objective branch-weight law at all, but as a normative coherence condition governing an agent's probability assignments \cite{FuchsSchack2013}. On that reading, the Born rule is not derived from the structure of branching worlds or record sectors, but interpreted as an empirically constrained rule for consistent probabilistic judgement.

The present theorem is different in both aim and object. It does not propose a normative rule for an agent's betting commitments. It identifies the only non-negative refinement-stable induced weight on robust record sectors under explicitly stated conditions on admissible refinement and internal indistinguishability.

\subsection{Logical difference of the present route}

The clearest way to summarize the difference is to ask where each route places its primary structure. Standard routes typically begin with one of the following:
\begin{enumerate}[label=(\roman*)]
\item a global measure on projections or effects,
\item rational preference in branching settings,
\item symmetry of entangled states,
\item self-locating credence after branching,
\item broad operational postulates about measurement,
\item normative coherence constraints on probability assignments.
\end{enumerate}

The present paper begins elsewhere. It starts from robust record sectors, admissible continuation bundles, an extensive bundle valuation on disjoint bundles, and explicit structural conditions on admissible refinement. Its theorem is correspondingly narrower than a full reconstruction of quantum probability, but sharper in logical target. What is shown is not that every projector carries Born weight by abstract measure representation, nor that every rational Everettian agent should bet in accordance with amplitude-squared weight, but that within an admissible Hilbert record layer the quadratic assignment is the only non-negative refinement-stable induced weight on robust record sectors.

The present route should therefore not be described as a replacement for Gleason, but as a distinct conditional uniqueness theorem at a different structural level.

%% file: sections/08_conclusion.tex
\section{Conclusion}
\label{sec:conclusion}

This paper has identified a conditional structural uniqueness theorem for quadratic weight that differs in target, additive carrier, and logical structure from the standard derivations. The argument does not begin with a probability measure on the full projector lattice, with rational betting constraints, with entanglement symmetries, or with self-locating credence. It begins with robust record sectors, admissible continuation bundles, and explicit structural conditions on admissible refinement.

Within an admissible Hilbert record layer, the argument proceeds in
three steps. First, admissible orthogonal refinement induces a
partition of continuation bundles and therefore an additive law for
sector-level induced weight. Second, internal equivalence together
with the norm classification supplied by admissible binary saturation
reduces the sector-level assignment to a one-variable function of the
norm of the projected record component. Third, sufficient admissible refinement
richness forces the resulting functional equation: in the main theorem
this is secured by admissible binary saturation, while
\cref{prop:dense-saturation} shows that dense admissible saturation
already suffices if continuity of the profile function is added.
Non-negativity then forces the additive function to be linear by
\cref{lem:cauchy}. The result is the quadratic assignment
\[
\Wpsi{R} = c\,\norm{\Proj{R}\Psi}^2,
\]
with the normalized case giving the standard Born assignment
\[
\Wpsi{R} = \norm{\Proj{R}\Psi}^2.
\]

The main conceptual point is therefore not merely that a quadratic
expression reappears. It is that, on the stated domain, the quadratic
assignment is the only non-negative refinement-stable induced weight.
The theorem isolates a specific structural threshold: once the
additive primitive is placed on continuation bundles, binary-profile-equivalent
pairs carry the same induced weight, and the admissible refinement
structure is rich enough to force the functional equation, no other
non-negative induced weight can be refinement-stable.
This source of uniqueness sits at a different structural level from the standard global measure-theoretic and decision-theoretic routes. The framework is not merely a convenient domain restriction. It isolates a functional setting in which induced weight on record-like alternatives is both non-arbitrary and structurally constrainable.

Just as important is the restricted scope of the result. The theorem neither derives Hilbert structure nor assigns weights to arbitrary projectors, and it does not claim that every orthogonal decomposition is physically meaningful or that broader metaphysical conclusions about branching structure follow by themselves. What it does show is narrower and sharper: once one works within an admissible Hilbert record layer and imposes the stated structural conditions, the quadratic assignment is forced as the only non-negative refinement-stable induced weight on robust record sectors.

That conditional form should be regarded as a strength. It makes the
threshold structure of the argument transparent and keeps the
uniqueness claim tied to explicitly stated premises. It also separates
two questions that are often blurred together: the mathematical
uniqueness question and the physical applicability question. Future
work can therefore proceed in a controlled way. One direction is to
investigate whether the internal equivalence principle can be justified
from a still deeper characterization of internally accessible record
structure. Another is to study which physically realistic record
sectors satisfy admissible binary saturation, or at least dense
admissible saturation together with continuity, and where these
conditions break down. A third is to ask whether the present route can
be extended beyond binary refinements while preserving the same degree
of explicitness.

The present result should therefore be read as a focused theorem about
induced weight on robust record sectors. Its claim is not maximal
generality, but explicit structural uniqueness at a clearly identified
threshold.
Under the stated conditions, the quadratic assignment is
not one admissible choice among many. It is the only non-negative
refinement-stable one, and the structural conditions that force this
conclusion have been stated explicitly enough to be evaluated,
contested, refined, or further mapped at the structural and empirical
level.

%% file: bib/references.bib
@article{Gleason1957,
  author  = {Gleason, Andrew M.},
  title   = {Measures on the Closed Subspaces of a {Hilbert} Space},
  journal = {Journal of Mathematics and Mechanics},
  volume  = {6},
  number  = {4},
  pages   = {885--893},
  year    = {1957},
  doi     = {10.1512/iumj.1957.6.56050}
}

@article{Busch2003,
  author  = {Busch, Paul},
  title   = {Quantum States and Generalized Observables: A Simple Proof of {Gleason's} Theorem},
  journal = {Physical Review Letters},
  volume  = {91},
  number  = {12},
  pages   = {120403},
  year    = {2003},
  doi     = {10.1103/PhysRevLett.91.120403}
}

@article{Deutsch1999,
  author  = {Deutsch, David},
  title   = {Quantum Theory of Probability and Decisions},
  journal = {Proceedings of the Royal Society A: Mathematical, Physical and Engineering Sciences},
  volume  = {455},
  number  = {1988},
  pages   = {3129--3137},
  year    = {1999},
  doi     = {10.1098/rspa.1999.0443}
}

@incollection{Wallace2010,
  author    = {Wallace, David},
  title     = {A Formal Proof of the {Born} Rule from Decision-Theoretic Assumptions},
  booktitle = {Many Worlds?: {Everett}, Quantum Theory, and Reality},
  editor    = {Saunders, Simon and Barrett, Jonathan and Kent, Adrian and Wallace, David},
  publisher = {Oxford University Press},
  address   = {Oxford},
  pages     = {227--263},
  year      = {2010},
  doi       = {10.1093/acprof:oso/9780199560561.003.0010}
}

@book{Wallace2012,
  author    = {Wallace, David},
  title     = {The Emergent Multiverse: Quantum Theory According to the {Everett} Interpretation},
  publisher = {Oxford University Press},
  address   = {Oxford},
  year      = {2012},
  doi       = {10.1093/acprof:oso/9780199546961.001.0001}
}

@article{Zurek2005,
  author  = {Zurek, Wojciech H.},
  title   = {Probabilities from Entanglement, {Born's} Rule from Envariance},
  journal = {Physical Review A},
  volume  = {71},
  number  = {5},
  pages   = {052105},
  year    = {2005},
  doi     = {10.1103/PhysRevA.71.052105}
}

@article{SchlosshauerFine2005,
  author  = {Schlosshauer, Maximilian and Fine, Arthur},
  title   = {On {Zurek's} Derivation of the {Born} Rule},
  journal = {Foundations of Physics},
  volume  = {35},
  number  = {2},
  pages   = {197--213},
  year    = {2005},
  doi     = {10.1007/s10701-004-1941-6}
}

@incollection{SebensCarroll2014,
  author    = {Sebens, Charles T. and Carroll, Sean M.},
  title     = {Many Worlds, the {Born} Rule, and Self-Locating Uncertainty},
  booktitle = {Quantum Theory: A Two-Time Success Story},
  editor    = {Struppa, Daniele C. and Tollaksen, Jeffrey M.},
  publisher = {Springer},
  address   = {Milan},
  pages     = {157--169},
  year      = {2014},
  doi       = {10.1007/978-88-470-5217-8_10},
  note      = {Preprint version: arXiv:1405.7907 [quant-ph]}
}

@incollection{GreavesMyrvold2010,
  author    = {Greaves, Hilary and Myrvold, Wayne},
  title     = {{Everett} and Evidence},
  booktitle = {Many Worlds?: {Everett}, Quantum Theory, and Reality},
  editor    = {Saunders, Simon and Barrett, Jonathan and Kent, Adrian and Wallace, David},
  publisher = {Oxford University Press},
  address   = {Oxford},
  pages     = {264--304},
  year      = {2010},
  doi       = {10.1093/acprof:oso/9780199560561.003.0011}
}

@incollection{Kent2010,
  author    = {Kent, Adrian},
  title     = {One World Versus Many: The Inadequacy of {Everettian} Accounts of Evolution, Probability, and Scientific Confirmation},
  booktitle = {Many Worlds?: {Everett}, Quantum Theory, and Reality},
  editor    = {Saunders, Simon and Barrett, Jonathan and Kent, Adrian and Wallace, David},
  publisher = {Oxford University Press},
  address   = {Oxford},
  pages     = {307--354},
  year      = {2010},
  doi       = {10.1093/acprof:oso/9780199560561.003.0012}
}

@article{MasanesGalleyMuller2019,
  author  = {Masanes, Llu{\'\i}s and Galley, Thomas D. and M{\"u}ller, Markus P.},
  title   = {The Measurement Postulates of Quantum Mechanics Are Operationally Redundant},
  journal = {Nature Communications},
  volume  = {10},
  pages   = {1361},
  year    = {2019},
  doi     = {10.1038/s41467-019-09348-x}
}

@misc{Stacey2022,
  author        = {Stacey, Blake C.},
  title         = {{Masanes--Galley--M{\"u}ller} and the State-Update Postulate},
  year          = {2022},
  eprint        = {2211.03299},
  archivePrefix = {arXiv},
  primaryClass  = {quant-ph},
  doi           = {10.48550/arXiv.2211.03299},
  note          = {Preprint}
}

@article{FuchsSchack2013,
  author  = {Fuchs, Christopher A. and Schack, R{\"u}diger},
  title   = {{Quantum-Bayesian} Coherence},
  journal = {Reviews of Modern Physics},
  volume  = {85},
  number  = {4},
  pages   = {1693--1715},
  year    = {2013},
  doi     = {10.1103/RevModPhys.85.1693}
}

@article{Caves2002,
  author  = {Caves, Carlton M. and Fuchs, Christopher A. and Manne, Kiran K. and Renes, Joseph M.},
  title   = {{Gleason}-Type Derivations of the Quantum Probability Rule for Generalized Measurements},
  journal = {Foundations of Physics},
  volume  = {34},
  number  = {2},
  pages   = {193--209},
  year    = {2004},
  doi     = {10.1023/B:FOOP.0000019581.00318.a5}
}

@article{Zurek2003,
  author  = {Zurek, Wojciech H.},
  title   = {Environment-Assisted Invariance, Entanglement, and Probabilities in Quantum Physics},
  journal = {Physical Review Letters},
  volume  = {90},
  number  = {12},
  pages   = {120404},
  year    = {2003},
  doi     = {10.1103/PhysRevLett.90.120404}
}

@misc{Zhang2026Additivity,
  author        = {Zhang, Jiaxuan},
  title         = {Summing to Uncertainty: On the Necessity of Additivity in Deriving the {Born} Rule},
  year          = {2026},
  eprint        = {2603.06211},
  archivePrefix = {arXiv},
  primaryClass  = {quant-ph},
  doi           = {10.48550/arXiv.2603.06211},
  note          = {Preprint}
}

@article{Zurek2003Decoherence,
  author  = {Zurek, Wojciech H.},
  title   = {Decoherence, Einselection, and the Quantum Origins of the Classical},
  journal = {Reviews of Modern Physics},
  volume  = {75},
  number  = {3},
  pages   = {715--775},
  year    = {2003},
  doi     = {10.1103/RevModPhys.75.715}
}

@book{Schlosshauer2007Decoherence,
  author    = {Schlosshauer, Maximilian},
  title     = {Decoherence and the Quantum-to-Classical Transition},
  publisher = {Springer},
  address   = {Berlin},
  series    = {The Frontiers Collection},
  year      = {2007},
  doi       = {10.1007/978-3-540-35775-9}
}

@article{OllivierPoulinZurek2004,
  author  = {Ollivier, Harold and Poulin, David and Zurek, Wojciech H.},
  title   = {Objective Properties from Subjective Quantum States: Environment as a Witness},
  journal = {Physical Review Letters},
  volume  = {93},
  number  = {22},
  pages   = {220401},
  year    = {2004},
  doi     = {10.1103/PhysRevLett.93.220401}
}

@article{Zurek2009QuantumDarwinism,
  author  = {Zurek, Wojciech H.},
  title   = {Quantum Darwinism},
  journal = {Nature Physics},
  volume  = {5},
  pages   = {181--188},
  year    = {2009},
  doi     = {10.1038/nphys1202}
}
